\tikzstyle{shgraph} = [x=1.5cm,->,every label/.style=degree,every 
\tikzstyle{group} = [draw, rectangle, align=center, rounded corners, minimum  
\tikzstyle{degree} = [inner sep=0pt, label distance=0pt, font=\footnotesize]
\tikzstyle{edgename} = [draw, auto, inner sep=2pt, font=\footnotesize]
\newcommand{\shnode}[3]{\node[group,label=30:{\tiny #1},label=330:{\tiny 
#2},label=north west:{#3}]}
\let\oldproof=\proof
\renewenvironment{proof}{\intheoremtrue\oldproof}{\hspace*{1em}\hbox{\proofbox}\endtrivlist}
\newtheorem{theorem}{Theorem}[section]
\newtheorem{lemma}[theorem]{Lemma}
\newtheorem{example}[theorem]{Example}
\newtheorem{definition}[theorem]{Definition}
\newcommand{\fun}{\rightarrow}
\newcommand{\card}[1]{|#1|}
\newcommand{\Nat}{\mathbb{N}}
\newcommand{\fwp}{\wp_f}
\newcommand{\supp}[1]{\llfloor #1 \rrfloor}
\newcommand{\mwp}{\wp_m}
\newcommand{\emptymulti}{\multil \multir}
\newcommand{\multil}{\{\!\!\{}
\newcommand{\multir}{\}\!\!\}}
\newcommand{\multisum}{\uplus}
\newcommand{\bigmultisum}{\biguplus}
\newcommand{\src}{\mathrm{src}}
\newcommand{\tgt}{\mathrm{tgt}}
\newcommand{\eq}{\mathrm{Eq}}
\newcommand{\var}{\mathcal{V}}
\newcommand{\Isubst}{\mathit{ISubst}}
\newcommand{\Ren}{\mathit{Ren}}
\newcommand{\vars}{\mathit{vars}}
\newcommand{\mgu}{\mathrm{mgu}}
\newcommand{\occ}{\mathit{occ}}
\newcommand{\rng}{\mathrm{rng}}
\newcommand{\dom}{\mathrm{dom}}
\newcommand{\mgup}{\mathrm{mgu_p}}
\newcommand{\Sharing}{\mathtt{Sharing}}
\newcommand{\Lin}{\mathtt{Lin}}
\newcommand{\Def}{\mathtt{Def}}
\newcommand{\ShLinp}{\mathtt{ShLin}^2}
\newcommand{\ShLin}{\mathtt{ShLin}}
\newcommand{\Linp}{\mathtt{ShLin}^{\omega}}
\newcommand{\lp}{\mathrm{\omega}}
\newcommand{\res}{\mathit{res}}
\newcommand{\flatten}[1]{\underline{#1}}
\newcommand{\wrt}{w.r.t.~} 
\newcommand{\ie}{i.e., } 
\newcommand{\eg}{e.g.\ } 
\newcommand{\ra}{\rightarrow} 
\newcommand{\mG}{{\mathcal G}}
\newcommand{\edgesize}{\footnotesize}
\title{Optimal multi-binding unification for sharing and linearity analysis}
\author[G. Amato and F. Scozzari]
{GIANLUCA AMATO  and FRANCESCA SCOZZARI\\
Dipartimento di Economia, Universit\`a di Chieti-Pescara\\
\email{gamato@unich.it, fscozzari@unich.it}}
\begin{document}

\maketitle

\label{firstpage}

\begin{abstract}
In the analysis of logic programs, abstract domains for detecting sharing 
properties are widely used. Recently the new domain $\Linp$
has been introduced to generalize both sharing and linearity information. This 
domain is endowed with an optimal abstract operator for
single-binding  unification. The authors claim that the repeated application 
of this operator is also  optimal for multi-binding unification. This is the 
proof of  such a claim.
\end{abstract}

\begin{keywords}
 Static analysis, abstract interpretation, sharing, linearity, unification.
\end{keywords}

\section{Introduction}
\label{sec:introduction}

In the abstract interpretation-based static analysis of logic programs, many 
abstract domains for encoding sharing information have been proposed. For 
instance, in the original and most studied  $\Sharing$ domain of
\citeN{JacobsL92}, the substitution $\theta=\{x/s(u,v), y/g(u,u,u), z/v\}$ is 
abstracted into $\{uxy,vxz\}$, where the sharing group $uxy$ 
means that $\theta(u)$, $\theta(x)$, and $\theta(y)$ share a common variable, 
namely $u$. 

Since $\Sharing$ is not very precise, it is often combined with other domains 
handling freeness, linearity, groundness or structural information (see 
\citeNP{BagnaraZH05TPLP} for a comparative evaluation). In particular, adding 
some kind of linearity information seems to be very profitable, both for the 
gain in precision and speed which can be obtained, and for the fact that it 
can be easily and elegantly embedded inside the sharing groups (see 
\citeNP{King94}). For example, if we know  that $x$, $y$ and $z$ do not share, 
nothing can be said after the unification with $\{z/t(x,y)\}$. However, if we 
also know that $z$ is linear, then we may conclude that $x$ and $y$ do not 
share after the unification.

Recently, the new abstract domain $\Linp$  has been proposed by
\citeN{AmatoS09sharlin} as a  generalization of $\Sharing$. 
It is able to encode the \emph{amount} of non-linearity in a 
substitution, by keeping track of 
the exact number of occurrences of the 
same variable in a term.  The above substitution $\theta$ is abstracted 
into $\{uxy^3,vxz\}$ by $\Linp$, with the the additional information that the 
variable $u$ occurs 
three times in $\theta(y)$. The authors provide a constructive 
characterization of the optimal abstract unification operator for 
single-binding substitutions (i.e., substitutions $\{x/t\}$ with a single 
variable $x$). Such operator is used to 
derive optimal (single-binding) abstract unification operators for both the 
domains $\Sharing \times \Lin$ \cite{HW92,MuthukumarH92} and $\ShLinp$ 
\cite{King94}. These were the first optimality results for domains   
combining aliasing and linearity information.

In the same paper the authors claim that computing abstract unification over 
$\Linp$ one binding at a time yields the best abstract unification for 
multi-binding  substitutions. In this paper we prove this claim.

To this purpose, we introduce a parallel abstract unification operator, which
computes the abstract unification over $\Linp$ by  considering all the 
bindings 
at the same time. We prove that (1) the parallel unification operator and the 
standard (sequential) one do coincide over $\Linp$ and that (2) the parallel 
unification operator is optimal. 


\section{Preliminaries}
\label{sec:notation}
Given a set $A$, we use $\wp(A)$ for the powerset of $A$, $\fwp(A)$ for
the set of finite subsets of $A$ and $\card{A}$ for the cardinality of $A$.
$\Nat$ is the set of natural numbers with zero.

\subsection{Multisets}

A \emph{multiset} is a set where repetitions are allowed. We denote by 
$\multil v_1, \ldots, v_m \multir$ a multiset, where $v_1,
\ldots, v_m$ is a sequence with (possible) repetitions. 
We denote by $\emptymulti$ the empty multiset.
We will often use the polynomial notation $v_1^{i_1} \ldots
v_n^{i_n}$, where $v_1, \ldots, v_n$ is a sequence without
repetitions, to denote a multiset $A$ whose element $v_j$ appears $i_j$ times.
The set $\{v_j \mid i_j > 0\}$  is called the \emph{support} of $A$ and is 
denoted by $\supp{A}$.  We also use the functional notation $A: \{v_1, \ldots, 
v_n\} \fun \Nat$, where  $A(v_j)=i_j$. 


In this paper, we only consider multisets whose support is \emph{finite}.
We denote with $\mwp(X)$ the set of all
the multisets whose support is \emph{any finite subset} of $X$. 
For example, both $a^2c^4$ and $a^1b^2c^3$ are elements of
$\mwp(\{a,b,c\})$.

The new fundamental operation for multisets
is the \emph{sum}, defined as
\begin{equation}
  A \multisum B = \lambda v \in \supp{A} \cup \supp{B}. A(v)+B(v)
  \enspace .
\end{equation}
For instance, the sum of $a^2c^4$ and $a^1b^2c^3$ is $a^3b^2c^7$. 
Given a multiset $A$ and $X \subseteq \supp{A}$,
the \emph{restriction} of $A$ over $X$, denoted by
$A|_{X}$, is the only multiset $B$ such that $\supp{B}=X$ and $B(v)=A(v)$ 
for each $v \in X$. 


\subsection{Multigraphs}
We call (directed) \emph{multigraph} a graph where multiple distinguished 
edges are allowed between nodes.  We use the definition of multigraph which is 
customary in category theory \cite{MacLane71}.

\begin{definition}[Multigraph]
  A \emph{multigraph} $G$ is a tuple $\langle N_G, E_G, \src_G, \tgt_G
  \rangle$ where $N_G \neq \emptyset$ and $E_G$ are the sets of
  \emph{nodes} and \emph{edges} respectively, $\src_G: E_G \fun N_G$
  is the \emph{source} function which maps each edge to its starting
  node, and $\tgt_G: E_G \fun N_G$ is the \emph{target} function which
  maps each edge to its ending node.
%
\end{definition}

We write $e: n_1 \ra n_2
\in G$ to denote an edge $e \in E_G$ such that $\src_G(e)=n_1$ and
$\tgt_G(e)=n_2$. 
We call \emph{in-degree} (respectively \emph{out-degree}) of a node
$n$ the cardinality of the set $\{ e \in E_G \mid \tgt_G(e)=n \}$
(respectively $\{ e \in E_G \mid \src_G(e)=n \}$).

Given a multigraph $G$, a \emph{path} $\pi:n_1 \ra n_{k}$ is a \emph{non-empty}
sequence of nodes $n_1 \ldots n_k$ such that, for each $i \in
\{1,\ldots,k-1\}$, there is either an edge $n_i \ra n_{i+1} \in G$ or
an edge $n_{i+1} \ra n_{i} \in G$. Nodes $n_1$ and $n_k$ are the
\emph{endpoints} of $\pi$, and we say that $\pi$ \emph{connects} $n_1$
and $n_k$. A multigraph is \emph{connected} when all pairs of nodes are
connected by at least one path.

\begin{example}
Consider the multigraph $G$ such that $N_G = \{1,2,3\}$, $E_G = \{a,b,c,d,e\}$,
$\src_G = \{ a \mapsto 1, b \mapsto 1, c \mapsto 2, d \mapsto 2, e \mapsto 
1\}$ and $\tgt_G = \{ a \mapsto 1, b \mapsto 2, c \mapsto 3, d \mapsto 3, e 
\mapsto 3\}$. It may be depicted as follows:
\[
\begin{tikzpicture}[->,auto,inner sep=2pt]
\node[group] (v1) at (0,0) {1};
\node[group] (v2) at (2,0) {2};
\node[group] (v3) at (1,2) {3};
\path (v1) edge[loop left] node{\small a} (v1);
\path (v1) edge node{\small b} (v2);
\path (v1) edge node{\small e} (v3);
\path (v2.110) edge[swap,bend left] node{\edgesize c} (v3.270);
\path (v2.70) edge[swap,bend right] node{\edgesize d} (v3.300);
\end{tikzpicture}
\]
Note that the edges $c$ and $d$ have the same starting and ending nodes, but different names.
According to our definition, the graph is connected. 
\end{example}

\subsection{Abstract Interpretation}

Given two sets $C$ and $A$ of concrete and abstract objects
respectively, an \emph{abstract interpretation} \cite{CousotC92fr} is
given by an approximation relation $\rightslice \subseteq A \times C$.
When $a \rightslice c$ holds, this means that $a$ is a correct
abstraction of $c$. We work in a framework where: (1) $A$ is a complete 
lattice, (2) $a \rightslice c$ and $a \leq a'$ imply $a' \rightslice c$,  (3) 
each $c$ has a least correct abstraction in $A$ given by 
$\alpha(c)$.

Given a function $f: C \ra C$, we say that $\tilde{f}: A \ra A$ is a correct 
abstraction of $f$, and we write $\tilde{f} \rightslice f$, when
\begin{equation*}
  a \rightslice c  \Rightarrow \tilde{f}(a)
  \rightslice f(c) \enspace .
\end{equation*}
We say that $\tilde{f}: A \ra A$ is the \emph{optimal} abstraction of $f$ when 
it is correct and, for each $f': A \ra A$,
\begin{equation*}
   f' \rightslice f
  \Rightarrow \tilde f \leq f'
\end{equation*}
with the standard pointwise ordering.

\subsection{Terms and Substitutions}

In the following, we fix a first order signature and a denumerable set
of variables $\var$. Given a term or other syntactic object $o$, we
denote by $\vars(o)$ the set of variables occurring in $o$ and by
$\occ(v,o)$ the number of occurrences of $v$ in $o$. When it does not
cause ambiguities, we abuse the notation and prefer to use $o$ itself
in the place of $\vars(o)$. For example, if $t$ is a term and $x \in
\var$, then $x \in t$ should be read as $x \in \vars(t)$.

We denote by $\epsilon$ the empty substitution, by $\{x_1/t_1,
\ldots, x_p/t_p\}$ a substitution $\theta$ with $\theta(x_i)=t_i\neq
x_i$, and by $\dom(\theta)=\{x\in\var \mid \theta(x)\neq x\}$ and 
$\rng(\theta)=\cup_{x\in\dom(\theta)}\vars(\theta(x))$ the domain and range of
$\theta$ respectively. Let $\vars(\theta)$ be the set $\dom(\theta)\cup
\rng(\theta)$, and given $U\in \fwp(\var)$, let $\theta|_{U}$ be the
projection of $\theta$ over $U$, \ie the unique substitution such that
$\theta|_{U}(x)=\theta(x)$ if $x\in U$ and $\theta|_{U}(x)=x$
otherwise. 
Given $\theta_1$ and $\theta_2$ two substitutions with disjoint
domains, we denote by $\theta_1 \cup \theta_2$ the substitution
$\theta$ such that $\dom(\theta)=\dom(\theta_1) \cup \dom(\theta_2)$
and $\theta(x)=\theta_i(x)$ if $x \in \dom(\theta_i)$, for each $i \in
\{1,2\}$.  The application of a substitution $\theta$ to a term $t$ is
written as $t \theta$ or $\theta(t)$. Given two substitutions $\theta$
and $\delta$, their composition, denoted by $\theta\circ \delta$, is
given by $(\theta \circ \delta)(x)=\delta(\theta(x))$. A substitution $\theta$ 
is idempotent when $\theta \circ \theta=\theta$ or, equivalently, when 
$\dom(\theta)\cap\rng(\theta)=\emptyset$. A substitution
$\rho$ is called renaming if it is a bijection from $\var$ to $\var$
(this is equivalent to saying that there exists a substitution
$\rho^{-1}$ such that $\rho \circ \rho^{-1} = \rho^{-1} \circ \rho =
\epsilon$).
The sets of idempotent substitutions and renamings are denoted by $\Isubst$
and $\Ren$ respectively. Given a set of 
equations $E$, we write $\theta=\mgu(E)$ to denote that $\theta$ is a most 
general unifier of $E$.
Conversely, $\eq(\theta)=\{ x = \theta(x) \mid
x\in\dom(\theta)\}$. 

A \emph{position} is a sequence of positive natural numbers. 
Given a term $t$ and a position  $\xi$, we define $t(\xi)$ inductively as 
follows:
\begin{equation}
 \begin{split}
  t(\epsilon) &= t \qquad \text{(where $\epsilon$ 
    denotes the empty sequence)}\\
  t(i \cdot \xi')&= \begin{cases} t_i(\xi') & \text{if $t$ is
      $s(t_1, \ldots, t_p)$ and $i \leq p$;}\\
    \text{undefined} & \text{otherwise.}
  \end{cases}
  \end{split}
\end{equation}
For any variable $x$, an \emph{occurrence} of $x$ in
$t$ is a position $\xi$ such that $t(\xi)=x$.

In the rest of the paper, we use: $U$, $V$, $W$ to denote finite sets
of variables; $u,v,w,x,y,z$ for variables; $r, s$ for term symbols; $t$ for
terms; $\beta, \eta,\theta,\delta$ for substitutions; $\rho$ for 
renamings.
%
%

\subsection{Existential Substitutions}
\label{sec:existential}

The denotational semantics of logic programs is not generally interested in  
substitutions, but in appropriate equivalence classes which abstract away from 
the particular renaming of clauses used during SLD derivations. Among the 
many choices available in the literature  (\eg 
\citeNP{JacobsL92,MSJ94,LeviS03}), we 
adopt the domain of existential substitutions \cite{AmatoS09sharing}.

Given $\theta_1, \theta_2 \in \Isubst$ and $U \in \fwp(\var)$, consider the 
equivalence relation $\sim_U$ given by
\begin{equation}\label{eq:2}
  \theta_1 \sim_U \theta_2 \iff \exists \rho \in \Ren. 
  \forall v \in U.\ \theta_1(v)=\rho(\theta_2(v)) \enspace ,
\end{equation}
and let $\Isubst_{\sim_U}$ be the quotient set of $\Isubst$ \wrt
$\sim_U$. The domain $\Isubst_\sim$ of \emph{existential
  substitutions} is defined as the union of all the
$\Isubst_{\sim_U}$ for $U\in \fwp(\var)$, namely:
\begin{equation}
\Isubst_\sim = \bigcup_{U\in \fwp(\var)} \Isubst_{\sim_U} \enspace .
\end{equation}
In the following we write $[\theta]_{U}$ for the equivalence class of
$\theta$ \wrt $\sim_U$. 
To ease notation, we often omit braces from the sets of variables of
interest when they are given extensionally. So we write
$[\theta]_{x,y}$ instead of $[\theta]_{\{x,y\}}$.

Given $U \in \fwp(\var)$, $[\delta]_U \in \Isubst_{\sim}$ and $\theta \in 
\Isubst$, the most general unifier of $\theta$ and $[\delta]_U$ may be 
obtained from the  mgu of $\theta$ and a suitably chosen representative for 
$\delta$, where variables not of interest are renamed apart. In formulas:
\begin{equation}
  \label{eq:mgu}
  \mgu([\delta]_U,\theta)=
  [\mgu(\delta', \theta)]_{U \cup \vars(\theta)} \enspace ,
\end{equation}
where $\delta \sim_U \delta' \in \Isubst$ and $\vars(\delta') \cap 
\vars(\theta) \subseteq U $.

\subsection{The Domain $\Linp$}

The domain $\Linp$ \cite{AmatoS09sharlin}  generalizes $\Sharing$ by 
recording  multiplicity of 
variables in sharing groups. We will call  a 
multiset of variables (an element of $\mwp(\var)$) an \emph{$\omega$-sharing 
group}.
Given a substitution $\theta$ and a variable $v \in \var$, we denote
by $\theta^{-1}(v)$ the $\omega$-sharing group $\lambda w \in \var.
\occ(v,\theta(w))$, which maps each variable $w$ to the number of
occurrences of $v$ in $\theta(w)$.

  Given a set of variables $U$ and a set of $\omega$-sharing groups $S 
  \subseteq \mwp(U)$, we say that $[S]_U$ \emph{correctly
  approximates} a substitution $[\theta]_W$ if $U=W$ and for each $v
  \in \var$, $\theta^{-1}(v)|_U \in S$. We write $[S]_U \rightslice
  [\theta]_W$ to mean that $[S]_U$ correctly approximates $[\theta]_W$.
Therefore, $[S]_U \rightslice [\theta]_U$ when $S$ contains all the 
$\omega$-sharing groups in $\theta$, restricted to the 
variables in $U$. 

\begin{definition}[$\Linp$]
The domain $\Linp$ is defined as
\begin{equation}
   \Linp =\{ [S]_U \mid U \in \fwp(\var), S \subseteq \mwp(U),
  S \neq \emptyset \Rightarrow \emptymulti \in S\} \enspace ,
\end{equation}
and ordered by $[S_1]_{U_1} \leq_\omega [S_2]_{U_2}$ iff $U_1=U_2$ and
$S_1 \subseteq S_2$. 
\end{definition}
In order to ease the notation, we write $[\{\emptymulti, B_1,\ldots, B_n\}]_U$ 
as  $[B_1,\ldots, B_n]_U$ by omitting the braces and the empty 
multiset.  Moreover, if $X \in \Linp$, we write
$B \in X$ in place of $X=[S]_U \wedge B \in S$. The best correct abstraction 
of a substitution $[\theta]_U$ is
\begin{equation}
  \alpha_\omega([\theta]_U)=[\{ \theta^{-1}(v)|_U \mid v \in \var \}]_U
  \enspace .
\end{equation}

\begin{example}
  Given $\theta=\{x/s(y,u,y), z/s(u,u), v/u\}$ and $U=\{w,x,y,z\}$, we have
  $\theta^{-1}(u)=uvxz^2$, $\theta^{-1}(y)= x^2y$,
  $\theta^{-1}(z)=\theta^{-1}(v)=\theta^{-1}(x)=\emptymulti$ and
  $\theta^{-1}(v)=v$ for all the other variables (included $w$).
  Projecting over $U$ we obtain
  $\alpha_\omega([\theta]_U)=[xz^2,x^2y,w]_U$.
\end{example}

\begin{definition}[Multiplicity of $\omega$-sharing groups]
The \emph{multiplicity} of an $\omega$-sharing
group $B$ in a term $t$ is defined as:
\begin{equation}
  \label{eq:chidef}
  \chi(B,t)=
  \sum_{v \in \supp{B}} B(v) \cdot \occ(v,t)  \enspace .
\end{equation}
\end{definition}
For instance,
$\chi(w^2x^3yz^4,r(x,y,s(x,y,z),v))= 2 \cdot 0 + 3 \cdot 2+1\cdot 2+ 4\cdot 1 
= 12$.
%
%

\section{Parallel Abstract Unification}

We want to find the optimal abstract operator in $\Linp$ corresponding to 
unification. \citeN{AmatoS09sharlin} define the operator $\mgu_\lp$, which
is optimal for single-binding substitutions.
The  cornerstone of their abstract unification is 
the concept of \emph{sharing graph} which plays the same role of alternating 
paths \cite{Son:ESOP86,King00jlp} for pair sharing analysis. The authors
claim that, by applying $\mgu_\lp$
one binding at a time, we get an optimal operator for multi-binding 
substitutions.

Here, in order to prove this claim, we proceed along these steps:
\begin{enumerate}
\item we define a new operator $\mgup$ which computes the abstract 
unification
  with a multi-binding substitution in one step. This is based on a
  generalization of the concept of sharing graph with multiple
  layers. For this reason, we speak of \emph{parallel sharing graph}
  and \emph{parallel abstract unification};
\item we prove that parallel abstract unification ($\mgup$) is actually the 
same as the \emph{sequential abstract unification} ($\mgu_\lp$);
\item we prove that parallel abstract unification is optimal
  w.r.t.~concrete unification.
\end{enumerate}

If $[S]_U \rightslice [\delta]_U$ and we unify $[\delta]_U$ with
$\theta$, some of the $\omega$-sharing groups in $S$ may be glued
together to obtain a bigger resultant group.  It happens that the gluing
may be represented by special families of labeled multigraphs which we call 
\emph{parallel sharing graphs}. 

\begin{definition}[Parallel sharing graph]\label{def:parallelSharinGraph}
  A \emph{parallel sharing graph} for a set of $\omega$-sharing groups  $S$ 
  and the idempotent substitution   $\theta=\{x_1/t_1,\ldots,
  x_p/t_p\}$ is a family $\mathcal G=\{G^i\}_{i\in [1,p]}$ of
  multigraphs over the same set of nodes $N_{\mG}$, equipped with a
  labeling function $l_{\mG}: N_{\mG} \ra S$, such that
  \begin{itemize}
   \item for each node $n \in N_{\mG}$ and each $i \in [1,p]$, the
      out-degree of $n$ in $G^i$ is equal to $\chi(l_{\mG}(n),x_i)$ and
      the in-degree of $n$ in $G^i$ is equal to $\chi(l_{\mG}(n),t_i)$;
  \item the sets of edges $E_{G^i}$ are all pairwise disjoint;
  \item $\flatten{\mG}$ (the \emph{flattening} of $\mG$) is connected.
  \end{itemize}

  In the last condition, $\flatten{\mG}$ is defined as the multigraph 
  $\langle N_{\mG}, E, \src_{\mG},\tgt_{\mG} \rangle$ where
  $E=\cup_{i \in [1,p]} E_{G^i}$ and $\src_{\mG}: E \ra N_{\mG}$
  maps $x \in E_{G^i}$ to $\src_{G^i}(e)$ ($\tgt_\mG$ is defined
  analogously). Each of the $G^i$'s which make up $\mG$ is called a
  \emph{layer} of the sharing graph.
\end{definition}

Since in this paper we only use parallel sharing graphs, in the following we 
will call them just \emph{sharing graphs}.

\begin{example}
  \label{ex:psharing}
  Let $S=\{ u^2z, uyz, vx, yz \}$ and $\theta=\{x/y,u/r(z)\}$. Consider the 
  sharing graph $\mG = \{G^1, G^2\}$ over the set of nodes 
  $N_\mG=\{a,b,c,d,e\}$ labeled by $l_\mG = \{ a \mapsto vx, b \mapsto vx, c 
  \mapsto u^2z, d \mapsto yz, e \mapsto uyz \}$:
  \[
  \begin{tikzpicture}[shgraph]
   \node at (1,1) {$G^1$};
   \shnode 0 1 a (xvl) at (0,0) {$vx$};
   \shnode 1 0 d (yz)  at (0,-2) {$yz$} ;
   \shnode 0 1 b (xvr) at (2,0) {$vx$};
   \shnode 0 0 c (u2z) at (1,-1) {$u^2z$};
   \shnode 1 0 e (yuz) at (2,-2) {$uyz$};
   \path (xvl) edge node{$e_1$} (yz);
   \path (xvr) edge node{$e_2$} (yuz);
  \end{tikzpicture}
  \hspace{2cm}
  \begin{tikzpicture}[shgraph]
    \node at (1,1) {$G^2$};
    \shnode 0 0 a (xvl) at (0,0) {$vx$};
    \shnode 1 0 d (yz)  at (0,-2) {$yz$} ;
    \shnode 0 0 b (xvr) at (2,0) {$vx$};
    \shnode 1 2 c (u2z) at (1,-1) {$u^2z$};
    \shnode 1 1 e (yuz) at (2,-2) {$uyz$};
    \path (u2z) edge node{$e_3$} (yz.0);
    \path (u2z) edge[bend right] node[swap]{$e_4$} (yuz);
    \path (yuz) edge[bend right] node[swap]{$e_5$} (u2z);
  \end{tikzpicture}  
  \]  
  The left layer ($G^1$) is for the binding $x/y$, while the right one ($G^2$) 
  is for the binding $u/r(z)$. Each node is annotated with its name,  
  label, in- and out-degree.
  Its flattening is the following connected multigraph:
  \[
    \begin{tikzpicture}[shgraph]
      \shnode{}{} a (xvl) at (0,0) {$vx$};
      \shnode{}{} d (yz)  at (0,-2) {$yz$} ;
      \shnode{}{} b (xvr) at (2,0) {$vx$};
      \shnode{}{} c (u2z) at (1,-1) {$u^2z$};
      \shnode{}{} e (yuz) at (2,-2) {$uyz$};
      \path (u2z) edge node {$e_3$} (yz.0);
      \path (xvl) edge node {$e_1$} (yz);
      \path (xvr) edge  node {$e_2$} (yuz);
      \path (u2z) edge[bend right] node[swap]{$e_4$} (yuz);
      \path (yuz) edge[bend right] node[swap]{$e_5$} (u2z);
      \end{tikzpicture}  
  \]
\end{example}

Let us motivate the three conditions of 
Definition~\ref{def:parallelSharinGraph}. If $[S]_U \rightslice [\delta]_U$ 
and we compute $\mgu([\delta]_U,\theta)$, each $G^i$ represents a possible way 
the sharing groups in $\delta$ may be joined together as a result of 
binding $x_i/t_i$, that is unifying  $\delta(x_i)$ and $\delta(t_i)$. We may 
restrict our 
attention to the case when, as a result of the unification, variables are only 
bound to other variables, not to composed terms. In other words, we assume 
that, for  each position $\xi$, the term $\delta(x_i)(\xi)$ is a variable iff 
$\delta(t_i)(\xi)$ is a variable. Each node in the sharing graph represents a 
variable $w_n$ such that $\delta^{-1}(w_n)|_U$ is the node label. Each edge 
$e: n_1 \ra n_2$ in $G^i$  represents a position $\xi$ 
such that $\delta(x_i)(\xi)=w_{n_1}$ and $\delta(t_i)(\xi)=w_{n_2}$. The 
result is that the variables $w_{n_1}$ and $w_{n_2}$ are aliased, hence the 
$\lp$-sharing groups 
$l_{\mG}(n_1)$ and $l_{\mG}(n_2)$  are joined together. 

According to this correspondence, the number of edges departing from $n$ 
should be equal to the number of occurrences of $w_n$  in $\delta(x_i)$,
that is $\chi(l_{\mG}(n),x_i)$. Analogously for the in-degree of nodes. This 
justifies the first condition in the definition.

The second condition  ensures that, in the flattening, no edges share the same  
identifier and therefore $\src_\mG$ and $\tgt_\mG$ are well defined. Remember 
that, since an edge is just an element of a set 
with associated source and target nodes, this does not preclude the 
possibility to have different edges with the same source and target nodes.

Finally, the third condition is needed since we want each sharing graph to 
represent a single non-empty sharing group. If the flattening were not 
connected, some pairs of variables would not be aliased, and the result of the 
unification of $\theta$ with $\delta$ would contain more than one non-empty 
sharing group.

\begin{example}
  \label{ex:psharing2}
  Consider the sharing graph in Example \ref{ex:psharing}.
  Let us associate to each node $n$ the variable 
  $w_n$, and consider the substitution 
  \[
  \delta=\{u/r(s(w_c,w_c,w_e)), 
  v/s(w_a,w_b), x/s(w_a,w_b), y/s(w_d, w_e), z/s(w_d,w_e,w_c)\} \enspace .\]
  This substitution is built according to the variables that appear in the 
  nodes. For instance, the first binding $u/r(s(w_c,w_c,w_e))$ suggests 
  that the variable $u$ appears in the nodes $c$ (twice) and $e$.
  
  We now want to unify $\delta$ with $\theta$. The first binding $x/y$ in 
  $\theta$ unifies $\delta(x) = s(w_a,w_b)$ with $\delta(y)=s(w_d, 
  w_e)$. This causes variables $(w_a, w_d)$ and $(w_b, w_e)$ to be aliased,
  exactly as described by the arrows $e_1$ and $e_2$ in the left graph.  
  The second binding
  $u/r(z)$ unifies $\delta(u)=r(s(w_c,w_c,w_e))$ with 
  $\delta(r(z))=r(s(w_d,w_e,w_c))$, which yields the aliasing of the pairs 
  $(w_c,w_d)$, $(w_c,w_e)$ and $(w_e,w_c)$, as described by the arrows $e_3$, 
  $e_4$ and $e_5$.  By transitivity, all pairs of variables are aliased. 
\end{example}

\begin{definition}[Resultant $\omega$-sharing group]
  The \emph{resultant $\omega$-sharing group} of the sharing graph 
  $\mG$ is
  \begin{equation}
    \label{eq:res}
    \res(\mG)=\bigmultisum_{s \in N_{\mG}} l_{\mG}(s) \enspace .
  \end{equation}  
\end{definition}

%


\begin{example}
  Consider again the sharing graph in Example \ref{ex:psharing}.
  The resultant sharing group is $u^3v^2x^2y^2z^3$. This is
  exactly the only non-empty sharing group in $\alpha_\lp([\eta]_U)$ where 
  $U=\vars(S)$ and
  \begin{multline*}
  \eta=\mgu(\delta,\theta)=
  \{ u/r(s(w_a,w_a,w_a)), v/s(w_a,w_a), x/s(w_a,w_a),\\ y/s(w_a,w_a),
  z/s(w_a,w_a,w_a), w_b/w_a, w_c/w_a, w_d/w_a, w_e/w_a \} \enspace .
  \end{multline*}
\end{example}

\begin{definition}[Parallel abstract mgu]  
  Given a set of $\omega$-sharing groups $S$ and an idempotent
  substitution $\theta$, the \emph{abstract parallel unification} of
  $S$ and $\theta$ is given by
  \begin{equation}
    \label{eq:mgup}
    \mgup(S,\theta) = \{ \res(\mG) \mid \text{
      $\mathcal G$ is a sharing graph for $S$ and $\theta$}\} \enspace .
  \end{equation}
 This is lifted to the domain $\ShLin^\lp$:
  \begin{equation}
    \label{eq:mgup2}
    \mgup([S]_U,\theta) = [\mgup(S  \cup \{ \multil v \multir \mid 
    v \in \vars(\theta) \setminus U\}, \theta)]_{U \cup \vars(\theta)} 
    \enspace .
  \end{equation}
\end{definition}

It is worth noting that, given any set of $\omega$-sharing groups $S$ and 
substitution $\theta$, there exist many different sharing graphs for 
$S$ and $\theta$.  Each sharing graph yields a resultant sharing 
group which must be  included in  the result of the abstract unification 
operator.  Of course, different sharing  graphs may give the same 
resultant sharing group. The abstract unification operator is defined by 
collecting all the resultant  sharing groups.

\begin{example}
\label{ex:two}
We show another sharing graph for the same $S$ and 
$\theta$ of Example~\ref{ex:psharing}. We omit from the picture the names
of edges and nodes, since they are not relevant here:
 \[
  \begin{tikzpicture}[shgraph]
    \node at (0.5,1) {$G^1$};
    \shnode{0}{1}{} (xv) at (0,0) {$vx$};
    \shnode{1}{0}{} (yz)  at (0,-2) {$yz$} ;
    \shnode{0}{0}{} (u2z) at (1,-1) {$u^2z$};
    \path (xvl) edge  (yz);
  \end{tikzpicture}
  \hspace{4cm}
  \begin{tikzpicture}[shgraph]
    \node at (0.5,1) {$G^2$};
    \shnode{0}{0}{} (xv) at (0,0) {$vx$};
    \shnode{1}{0}{} (yz)  at (0,-2) {$yz$} ;
    \shnode{1}{2}{} (u2z) at (1,-1) {$u^2z$};
    \path (u2z) edge[bend left]  (yz);
    \path (u2z) edge[loop right] (u2z);
  \end{tikzpicture}
  \]
The resultant sharing group is $u^2vxyz$.
\end{example}

It is worth noting that the domain $\Linp$ is not amenable to a direct 
implementation. Actually, it 
may be the  case that, even the mgu of a finite set of $\omega$-sharing groups 
with a 
single-binding substitution generates an infinite set of $\omega$-sharing 
groups (see Example~\ref{ex:infinite} later). However, it is an invaluable 
theoretical device to study the abstract operators for its abstractions, such 
as $\Sharing \times \Lin$ and $\ShLinp$.

\begin{example}
\label{ex:infinite}
It holds that $\mgup(\{ xy\},\{x/y\})= \{ \emptymulti \} \cup \{ x^iy^i  \mid 
i \geq 1 \}$.  Actually, for each $i \geq 1$, the following is a single-layer 
sharing graph:
\[
\begin{tikzpicture}[shgraph]
  \shnode 1 1  {} (v1) at (1,0) {$xy$} ;
  \shnode 1 1  {} (v2) at (2,0) {$xy$} ;
  \node (vNull) at (3,0) {$\cdots$} ;
  \shnode 1 1  {} (v3) at (4,0) {$xy$} ;
  \shnode 1 1  {} (v4) at (5,0) {$xy$} ;
  \path (v1) edge (v2);
  \path (v2) edge (vNull);
  \path (vNull) edge (v3);
  \path (v3) edge (v4);
  \draw [rounded corners](v4) -- ++(0,0.6) -- ++(-4,0) -- (v1);
  \draw [decorate,decoration=brace,aspect=3,-] (5.5,-0.6) -- (0.5,-0.6) ;
  \node at (3,-1) {$i$ nodes};
\end{tikzpicture}
\]
\end{example}

\subsection{Coincidence of Parallel and Sequential Abstract Unification}

For concrete substitutions, unification may be performed one binding at a 
time. On an abstract domain, computing one binding at a time generally incurs 
in a loss of precision. However, there are well known domains when this does 
not happen, such as $\Def$ \cite{ArmstrongMSS94} and $\Sharing$. We will 
show that computing one binding at a time does not cause loss of precision on the abstract domain $\Linp$. 

\begin{definition}[Abstract sequential unification]
  Given a set of $\omega$-sharing groups $S$ and an idempotent
  substitution $\theta$, the \emph{abstract sequential unification} 
  of $S$ and $\theta$, denoted by $\mgu_\lp(S,\theta)$, is given by:
  \begin{equation}
  \begin{split}
  \mgu_\lp(S,\epsilon) &=S \\
  \mgu_\lp(S,\{x/t\} \cup \theta)) &=\mgu_\lp(\mgup(S,\{x/t\}),\theta)
  \end{split}
  \end{equation}
\end{definition}
The definition may be lifted to the domain $\ShLin^\lp$ as for $\mgup$. It is 
immediate to check that $\mgup$ and $\mgu_\lp$ are equivalent for 
single-binding substitutions. We will prove that this holds for any 
substitution.

In \cite{AmatoS09sharlin} the abstract sequential unification $\mgu_\lp$ has 
been introduced starting from the definition of a sharing graph for 
single-binding unification. This is essentially a sharing graph with a single 
layer. Hence, it is immediate to check that the
definition of $\mgu_\lp$ given above is the same as the definition of
$\mgu_\lp$ given by \citeN{AmatoS09sharlin}. 

Before introducing the formal proof of coincidence between sequential and 
parallel abstract unification, we try to convey the intuitive idea behind it 
with  an example.

\begin{example}
  Consider again the sharing graph $\mG$ given in Example 
  \ref{ex:psharing} for $S=\{u^2z, uyz, xv, yz \}$ and $\theta=\{x/y,u/r(z)\}$.
  For the sake of conciseness, we can draw $\mG$ with a single picture, 
  omitting the in- and out-degree annotations on the nodes, and with 
  the edges in different styles, according to the layers they come from:
    \[
    \begin{tikzpicture}[shgraph]
      \shnode{}{} a (xvl) at (0,0) {$vx$};
      \shnode{}{} d (yz)  at (0,-2) {$yz$} ;
      \shnode{}{} b (xvr) at (2,0) {$vx$};
      \shnode{}{} c (u2z) at (1,-1) {$u^2z$};
      \shnode{}{} e (yuz) at (2,-2) {$uyz$};
      \path (u2z) edge node {$e_3$} (yz.0);
      \path (xvl) edge[dashed] node {$e_1$} (yz);
      \path (xvr) edge[dashed]  node {$e_2$} (yuz);
      \path (u2z) edge[bend right] node[swap]{$e_4$} (yuz);
      \path (yuz) edge[bend right] node[swap]{$e_5$} (u2z);
      \end{tikzpicture}  
    \]
    As we said before, the resultant sharing group of $\mG$ is $u^3v^2x^2y^2z^3$.
    The same sharing group may be obtained by first computing
    $S'=\mgup(S,\{x/y\})$ and later $\mgup(S',\{u/r(z)\})$. 
  Consider the three connected components in the multigraph $G^1$, 
  corresponding to the
  dashed arrows:
  \[
  \begin{tikzpicture}[shgraph]        
        \shnode{0}{1} a (xvl) at (0,0) {$vx$};
        \shnode{1}{0} d (yz)  at (0,-2) {$yz$} ;
        \shnode{0}{0} b (xvr) at (4,0) {$vx$};
        \shnode{0}{1} c (u2z) at (2,0) {$u^2z$};
        \shnode{1}{0} e (yuz) at (4,-2) {$uyz$};
        \node at (0,1) {$G_1^1$};
        \node at (2,1) {$G_1^2$};
        \node at (4,1) {$G_1^3$};
        \path (xvl) edge[dashed] node {$e_1$} (yz);
        \path (xvr) edge[dashed] node {$e_2$} (yuz);
  \end{tikzpicture}
  \]
  Each of them alone may be viewed as a sharing graph with a single 
  layer for the substitution $\{x/y\}$. Therefore, $vxyz$, $u^2z$ and $uvxyz$ 
  are  elements of   $S'$. Now, in the original sharing graph, we  
  collapse these   connected   components:
  \[
  \begin{tikzpicture}[shgraph]
     \shnode{}{} a (xvl) at (0,0) {$vx$};
     \shnode{}{} d (yz)  at (0,-2) {$yz$} ;
     \shnode{}{} b (xvr) at (4,0) {$vx$};
     \shnode{}{} c (u2z) at (2,0) {$u^2z$};
     \shnode{}{} e (yuz) at (4,-2) {$uyz$};
     \node[group,fit={(xvl)(yz)}, inner sep=10pt, label=north west:$n_1$] (vxyz) {};
     \node[group,fit={(u2z)}, inner sep=10pt, label=north west:$n_3$] (u2zbig) {};
     \node[group,fit={(xvr) (yuz)}, inner sep=10pt,  label=north west:$n_3$] (uvxyz) {};
     \path (xvl) edge[dashed] node {$e_1$} (yz);
     \path (xvr) edge[dashed] node {$e_2$} (yuz);
     \path (u2z) edge node {$e_3$} (yz.0);
     \path (u2z) edge[bend right] node[swap]{$e_4$} (yuz);
     \path (yuz) edge[bend right] node[swap]{$e_5$} (u2z);
  \end{tikzpicture}
  \]
  and we get
  \[
    \begin{tikzpicture}[shgraph]
     \shnode{1}{0}{$n_1$} (vxyz)  at (0,-1) {$vxyz$} ;
     \shnode{1}{2}{$n_2$} (u2z) at (2,0) {$u^2z$};
     \shnode{1}{1}{$n_3$} (uvxyz) at  (4,-1) {$uvxyz$};
     \path (u2z) edge node {$e_3$} (vxyz.0);
     \path (u2z) edge[bend right] node[swap]{$e_4$} (uvxyz);
     \path (uvxyz) edge[bend right] node[swap]{$e_5$} (u2z);
     \end{tikzpicture}
  \]
  which is a sharing graph for $S'$ and $\{u/r(z)\}$.  Note that, in this new
  sharing graph, the nodes correspond to the connected components of $G^1$ 
  and the edges are the same as in the original $G^2$, but with different 
  source and target. The edge
  $e_3$ from $c$ to $d$ is now an edge from $n_2$ to $n_1$, since $d$ is in the
  first connected component and $c$ in the second one.  We obtain, as 
  expected, that $u^3v^2x^2y^2z^3 \in \mgu_\lp(S',\{u/r(z)\})$.
\end{example}

\begin{example}
\label{ex:three}
We now show an example of the converse, i.e., how to move from sequential to   
parallel unification.  Assume 
$S=\{vx, u^2, uvx, uvyz, uy\}$ and $\theta=\{x/y,  u/s(z,s(z,v)) \}$. The 
following are single-layer sharing graphs for $S$ and $\{x/y\}$:
  \[
  \begin{tikzpicture}[shgraph]        
     \shnode{0}{1} a (xvl) at (0,0) {$vx$};
     \shnode{1}{0} d (yz)  at (0,-2) {$uvyz$} ;
     \shnode{0}{1} b (xuv) at (4,0) {$uvx$};
     \shnode{0}{0} c (u2z) at (2,0) {$u^2$};
     \shnode{1}{0} e (yu) at (4,-2) {$uy$};
     \node at (0,1) {$\mG_1$};
     \node at (2,1) {$\mG_2$};
     \node at (4,1) {$\mG_3$};
     \path (xvl) edge node{$e_1$} (yz);
     \path (xuv) edge node{$e_2$} (yu);
  \end{tikzpicture}
  \]
Note that we have chosen disjoint sets of nodes $N_{\mG_1} = \{a,d\}$, 
$N_{\mG_2}=\{c\}$ and $N_{\mG_3}=\{b,e\}$, and disjoint sets of edges 
$E_{\mG_1} = \{e_1\}$, $E_{\mG_2}=\{\}$ and
$E_{\mG_3}=\{e_2\}$. By definition, the corresponding resultant 
sharing groups, i.e., $uv^2xyz$, $u^2$ and $u^2vxy$ are elements of 
$S'=\mgu_\lp(S,\{x/y\})$. 
Now consider the following sharing graph $\mG$ for $S'$ and the 
binding 
$u/s(z,s(z,v))$:
  \begin{equation}
  \label{eq:psharingtwo}
  \begin{tikzpicture}[shgraph]        
    \node at (2,1.5) {$\mG$};
    \shnode{4}{1}{$n_1$} (vxyz) at (0,0) {$uv^2xyz$};      
    \shnode{0}{2}{$n_2$} (u2z) at (2,0) {$u^2$};
    \shnode{1}{2}{$n_3$} (u2vxy) at (4,0) {$u^2vxy$};
    \path (vxyz) edge[loop left] node{$e_7$} (vxyz);
    \path (u2z.north) edge[bend right] node[swap]{$e_3$} (vxyz.north) ;
    \path (u2z) edge node{$e_4$} (vxyz) ;
    \path (u2vxy) edge[loop right] node{$e_5$} (u2vxy) ;
    \path (u2vxy.south) edge[bend left] node[swap]{$e_6$} (vxyz.south) ;
   \end{tikzpicture}
  \end{equation}
We need to build a sharing graph for $S$ and $\{x/y, u/s(z,s(z,v))\}$ 
from these pieces. 
The idea is to replace, in the graph $\mG$, the nodes $n_1$, $n_2$ and $n_3$ 
with the graphs $\mG_1$, $\mG_2$ and $\mG_3$ respectively:
  \[
  \begin{tikzpicture}[shgraph]   
    \shnode{}{} a (xvl) at (0,0) {$vx$};
    \shnode{}{} d (yz)  at (0,-2) {$uvyz$} ;
    \shnode{}{} b (xuv) at (4,0) {$uvx$};
    \shnode{}{} c (u2zbis) at (2,0) {$u^2$};
    \shnode{}{} e (yu) at (4,-2) {$uy$};     
    \path (xvl) edge[dashed] node{$e_1$} (yz);
    \path (xuv) edge[dashed] node{$e_2$} (yu);
    \shnode{}{}{$n_1$}[fit={(xvl)(yz)}, inner sep=10pt] (vxyz) 
    {};      
    \shnode{}{}{$n_2$}[fit={(u2zbis)}, inner sep=10pt] (u2z) {};
    \shnode{}{}{$n_3$}[fit={(xuv)(yu)}, inner sep=10pt] (u2vxy) {};
    \path (u2z) edge[bend right] node[swap]{$e_3$} (vxyz) ;
    \path (u2z) edge[bend left] node{$e_4$} (vxyz) ;
    \path (u2vxy) edge[loop right] node{$e_5$} (u2vxy) ;
    \path (u2vxy) edge[bend left] node[swap]{$e_6$} (vxyz) ;
    \path (vxyz) edge[loop left]  node{$e_7$} (vxyz);
   \end{tikzpicture}
  \]
For each edge in $\mG$ we need to specify its target and source as a node in 
$\{a,b,c,d,e\}$, since giving only the connected component is not enough. For 
example, the target of $e_1$ should be either  $a$ or $d$. We may choose the 
targets freely, subject
to the conditions on the in-/out- degree of nodes. Since  
$\chi(vx,s(z,s(z,v)))=1$ and $\chi(uvyz,s(z,s(z,v)))=3$, among $e_3$, $e_4$, 
$e_6$  and $e_7$, three edges should be targeted  at  $uvyz$ and one should be 
targeted at $vx$. Among the many others, this is a possible sharing graph, 
where the different layers are depicted trough different line styles:
  \[
  \begin{tikzpicture}[shgraph]   
    \shnode{}{} a (xvl) at (0,0) {$vx$};
    \shnode{}{} d (yz)  at (0,-2) {$uvyz$} ;
    \shnode{}{} b (xuv) at (4,0) {$uvx$};
    \shnode{}{} c (u2z) at (2,0) {$u^2$};
    \shnode{}{} e (yu) at (4,-2) {$uy$};     
    \path (xvl) edge[dashed] node{$e_1$} (yz);
    \path (xuv) edge[dashed] node{$e_2$} (yu);
    \path (u2z) edge node[swap]{$e_3$} (xv) ;
    \path (u2z) edge[bend left] node{$e_4$} (yz) ;
    \path (xuv) edge[bend left] node{$e_6$} (yz) ;
    \path(yu.east) edge[bend right] node[swap]{$e_5$} (xuv.east) ;
    \path (yz) edge[loop left]  node{$e_7$} (yz);
   \end{tikzpicture}
  \]
Note  that the self loop on the node $n_3$ has become an edge from $uy$ to 
$uvx$. 
\end{example}

The ideas presented in the previous examples are formalized in the following 
result.
\begin{lemma}
  Given a set of $\omega$-sharing groups $S$ and an idempotent
  substitution $\theta$, we have that
\label{lem:mgucomp}
  $\mgup(S,\{x_1/t_1\} \cup \theta ) = \mgup(\mgup(S,\{x_1/t_1\}),\theta)$.
\end{lemma}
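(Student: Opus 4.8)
The statement is an equality of sets of $\omega$-sharing groups, so the plan is to prove the two inclusions separately, exhibiting for each sharing graph witnessing membership on one side a sharing graph witnessing membership on the other with the same resultant group. Throughout I write $\theta=\{x_2/t_2,\ldots,x_p/t_p\}$, so that $\{x_1/t_1\}\cup\theta=\{x_1/t_1,\ldots,x_p/t_p\}$, and I abbreviate $S'=\mgup(S,\{x_1/t_1\})$. The single algebraic fact needed repeatedly is that $\chi(\cdot,t)$ is additive in its first argument, i.e.\ $\chi(\bigmultisum_j B_j, t)=\sum_j \chi(B_j,t)$, which is immediate from the definition~\eqref{eq:chidef}. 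In particular the degree of a node obtained by merging several nodes is the sum of their degrees, and this is exactly what keeps the degree conditions of Definition~\ref{def:parallelSharinGraph} in sync with $\res$ when nodes are merged or split.

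For the inclusion $\mgup(S,\{x_1/t_1\}\cup\theta)\subseteq\mgup(S',\theta)$, I start from a sharing graph $\mG=\{G^1,\ldots,G^p\}$ and let $C_1,\ldots,C_k$ be the connected components of the first layer $G^1$ alone. Each $C_j$, equipped with the restriction of $l_{\mG}$ and read as a single-layer graph, satisfies the degree conditions for $x_1/t_1$ (they hold node-wise in $G^1$) and is connected, hence is a single-binding sharing graph for $S$ and $\{x_1/t_1\}$; therefore $\res(C_j)\in S'$. I then contract each $C_j$ to a single node $m_j$ labelled $\res(C_j)$, keeping the layers $G^2,\ldots,G^p$ but redirecting every edge to the contracted endpoints. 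Additivity of $\chi$ gives that the degree of $m_j$ in layer $i\geq 2$ equals $\chi(\res(C_j),x_i)$ for the out-degree and $\chi(\res(C_j),t_i)$ for the in-degree, so the degree conditions hold; the edge sets stay disjoint; and the flattening of the contracted family is $\flatten{\mG}$ with each $C_j$ collapsed, which is connected because $\flatten{\mG}$ is connected and the deleted $G^1$-edges ran only inside the $C_j$ (so they collapse to self-loops, whose removal does not disconnect). This contracted family is a sharing graph for $S'$ and $\theta$, and since $\res$ is a $\bigmultisum$ over all nodes it is unchanged by contraction, so its resultant group is $\res(\mG)$.

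For the converse inclusion, I start from a sharing graph $\mathcal H$ for $S'$ and $\theta$. Each node $m$ carries a label $l_{\mathcal H}(m)\in S'=\mgup(S,\{x_1/t_1\})$, so (apart from the empty-multiset label, handled by a one-node graph) I may fix a single-binding sharing graph $\mG_m$ for $S$ and $\{x_1/t_1\}$ with $\res(\mG_m)=l_{\mathcal H}(m)$, chosen with pairwise disjoint node and edge sets. I build $\mG$ by taking the disjoint union of all the $\mG_m$ as the node set and first layer, and by realizing the remaining layers from $\mathcal H$: an edge of the $i$-th layer of $\mathcal H$ from $m_1$ to $m_2$ must be re-sourced inside $N_{\mG_{m_1}}$ and re-targeted inside $N_{\mG_{m_2}}$. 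The crux of the lemma is that such a re-assignment exists meeting every degree constraint: the source choice and the target choice are independent, and for each cluster $N_{\mG_{m_1}}$ and layer $i$ the number of edges leaving $m_1$ is $\chi(l_{\mathcal H}(m_1),x_i)=\chi(\res(\mG_{m_1}),x_i)=\sum_{n\in N_{\mG_{m_1}}}\chi(l_{\mG_{m_1}}(n),x_i)$ by additivity, so the supply of outgoing edges matches exactly the demanded out-degrees and can be distributed among the nodes of the cluster, symmetrically for incoming edges. The degree conditions then hold by construction ($G^1$ from the $\mG_m$, the other layers from this distribution); edge-disjointness is inherited; and the flattening is connected because each cluster $N_{\mG_m}$ is internally connected through its $G^1$-edges while the inter-cluster edges reproduce, after contraction, exactly $\flatten{\mathcal H}$, which is connected. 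Finally $\res(\mG)=\bigmultisum_m\res(\mG_m)=\bigmultisum_m l_{\mathcal H}(m)=\res(\mathcal H)$, closing the inclusion.

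I expect the main obstacle to be the edge-redistribution step in the second inclusion: one must argue that the global degree constraints are satisfiable simultaneously, not merely cluster by cluster, and that self-loops of $\mathcal H$ (the case $m_1=m_2$, where source and target land in the same cluster but remain independent choices) and empty-multiset labels do not break the construction. The remaining work is bookkeeping supported by the additivity of $\chi$ and by the observation that node contraction and expansion commute with both $\res$ and connectedness.
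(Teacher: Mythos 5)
Your proof is correct and takes essentially the same route as the paper's: the $\subseteq$ direction by contracting the connected components of the first layer into single nodes labelled by their resultant groups, and the $\supseteq$ direction by expanding each node of the graph for $\mgup(S,\{x_1/t_1\})$ and $\theta$ into a witnessing single-binding sharing graph and redistributing the edges of the remaining layers, with additivity of $\chi$ securing the degree conditions in both directions. The obstacles you flag (per-cluster supply--demand matching, independence of source and target assignment, self-loops collapsing under contraction) are resolved in the paper exactly as you suggest.
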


\begin{proof}
  If $\theta=\epsilon$ the result easily follows  since
  $\mgup(S,\epsilon)= S$.  In the case $\theta \neq \epsilon$, we separately
  prove the two sides of the equality.

  \textbf{First part: $\subseteq$ inclusion.}   Let $B\in
  \mgup(S,\{x_1/t_1\} \cup \theta)$. We want to prove that $B \in
  \mgup(\mgup(S,\{x_1/t_1\}),\theta)$. To this aim, we will provide a 
   sharing graph $\mathcal \mG'$ for 
  $\mgup(S,\{x_1/t_1\})$ and $\theta$ such that $\res(\mG')=B$.
  
  Let $\theta=\{x_2/t_2,\ldots,x_p/t_p\}$. By definition, there exists a
   sharing graph $\mathcal G=\{G^i\}_{i \in [1,p]}$ such that
  $B=\res(\mG)$. We decompose $G^1$ into its
  connected components $G^1_1, \ldots, G^1_k$. Note that each $G^1_j$,
  labeled with the obvious restriction of $l_{\mG}$, is a sharing
  graph for $S$ and $\{x_1/t_1\}$, therefore $\res(G^1_j) \in
  \mgup(S,\{x_1/t_1\})$.
  
  We now show a sharing graph $\mathcal \mG'$ for 
  $\mgup(S,\{x_1/t_1\})$ and $\theta$ and prove that $\res(\mG')=B$.
  For any $i \in [2,p]$, let $G_i$ be the multigraph obtained
  from $G^i$ by collapsing each of the connected components $G_1^1,
  \ldots G^1_k$ to a single node. Formally:
  \begin{itemize}
  \item $N_{G_i}=\{ 1, \ldots, k \}$;
  \item $E_{G_i}=E_{G^i}$;
  \item $\src_{G_i}(e)=j$ iff $\src_{G^i}(e) \in G^1_j$;
  \item symmetrically for $\tgt_{G_i}$.
  \end{itemize}
  
  We want to prove that $\mathcal \mG'=\{G_i\}_{i \in [2,p]}$, 
  endowed with the labeling function $l_{\mG'}(j)=\res(G^1_j)$, is a
   sharing graph for $\mgup(S,\{x_1/t_1\})$ and $\theta$.
  By definition of sharing graph, we need to check that: first, the conditions 
  on the out-degree and the in-degree hold for each node; second, the sets of 
    edges are pairwise disjoint; third, the flattening is connected.
 
  \textbf{First condition.} We now show that the conditions on the out-degree 
  and the in-degree of the nodes hold.
  Given any node $j \in [1,k]$ we have that the
  out-degree of $j$ in $G_i$ is
  \[
  \begin{split}
    &\card{\{ e \in E_{G_i} \mid \src_{G_i}(e)=j \}}
      = \card{\{ e \in E_{G^i} \mid \src_{G^i}(e) \in G^1_j \}}\\
    &= \sum_{n \in N_{G^1_j}} \card{\{ e \in E_{G^i} \mid
      \src_{G^i}(e)=n \}}
     = \sum_{n \in N_{G^1_j}} \chi(l_{\mG}(n),x_i)\\
    &= \sum_{n \in N_{G^1_j}} \sum_{v \in \var} l_{\mG}(n)(v) \cdot
    \occ(v,x_i)
     = \sum_{v \in \var} \sum_{n \in N_{G^1_j}} l_{\mG}(n)(v) \cdot
    \occ(v,x_i)\\ 
    &= \sum_{v \in \var} \res(G^1_j)(v) \cdot \occ(v,x_i)
     = \sum_{v \in \var} l_{\mG'}(j)(v) \cdot \occ(v,x_i)\\ 
    &=  \chi(l_{\mG'}(j),x_i)\enspace .
  \end{split}
  \]
  Symmetrically, we have that the in-degree of $j$ in $G_i$ is
  $\chi(l_{\mG'}(j),t_i)$. 

  \textbf{Second condition.} It is immediate to check that the sets of edges 
  $E_{G_i}$ are pairwise disjoint. 

  \textbf{Third condition.} We prove that $\flatten{\mG'}$ is connected. 
  Assume that we want to find a path from $i$ to $j$.  Since
  $\flatten{\mathcal G}$ is connected, there is a path $\pi$ from
  some $n_1 \in N_{G^1_i}$ to some $n_2 \in N_{G^1_j}$. A path from
  $i$ to $j$ may be obtained in two steps:
  \begin{enumerate}
  \item by replacing each node $n$ in $\pi$ with $\bar{n}$ where
    $\bar{n}$ is the unique $m \in [1,k]$ such that $m \in N_{G^1_m}$;
  \item by replacing each subsequence $\bar{n} \bar{n}$ with a single
    node $\bar{n}$. Such a situation may arise when $\pi$ contains
    the subsequence $n m$ with $n \ra m \in G^1_q$ for some $q$. The
    corresponding edge $q \ra q$ may not exists in $\mG'$, but being a
    self-loop it may be deleted.
  \end{enumerate}
  
  Finally, we need to show that $\res(\mG')=B$. It
  is easy to check that $\res(\mG') = \bigmultisum_{i
    \in [1,k]} l_{\mG'}(i)=\bigmultisum_{i \in [1,k]} \res(G^1_i)=
  \bigmultisum_{i \in [1,k]} \bigmultisum_{n \in N_{G^1_i}} l_{\mG}(n)=
  \bigmultisum_{n \in N_{\mG}} l_{\mG}(n)=\res(\mG)$.
  
  \textbf{Second part: $\supseteq$ inclusion.} Let $S'=\mgup(S,\{x_1/t_1\})$ 
  and $B\in \mgup( S',\theta )$ where $\theta=\{x_2/t_2, \ldots, x_p/t_p\}$. 
  We show that there exists a sharing graph 
$\mG'$ for $S$ and $\{x_1/t_1, \ldots, x_p/t_p\}$ such that $\res(\mG')=B$.

  By definition, there is a sharing graph $\mathcal G=\{ G^i
  \}_{i \in [2,p]}$ for $S'$ and $\theta$ such that $\res(\mG)=B$. Since
  $S'=\mgup(S,\{x_1/t_1\})$, for each node $k \in N_{\mG}$ we have a
  sharing graph $G_k$ such that $\res(G_k) = l_{\mG}(k)$.  Without
  loss of generality, we may choose these graphs in such a way that
  the sets $N_{G_k}$ are pairwise disjoint and disjoint from
  $N_{\mG}$.
  
  For each multigraph $G^i$, with $i \in [2,p]$, we build a new multigraph 
  $\bar G^i$ obtained by replacing each node $k$ in $G^i$ with the set of 
  nodes of the generating graph $G_k$. Then, we pack the $\bar G^i$'s and 
  $G_k$'s into a sharing graph $\mG$. Formally, $\mG = \{ \bar G^i \}_{i \in 
  [1,p]}$ such that:
  \begin{itemize}
  \item $N_{\mG}=\bigcup_{k \in N_{\mG}} N_{G_k}$;
  \item $\bar G^1$ is the  union of the graphs $G_k$;  
  \item for $i \in [2,p]$, $E_{\bar G^i}=E_{G^i}$ ;
  \item for $i \in [2,p]$, $\src_{\bar G^i}$ is chosen freely, subject to the 
  following conditions:
  \begin{itemize}
  \item if $\src_{G^i}(e)=k$ then $\src_{\bar G^i}(e)$ is a node in $N_{G_k}$;
  \item the out-degree of each node $n$ in $\bar G^i$ is $\chi(l(n),x_i)$.   
  \end{itemize} 
   This is always possible since $l_{\mG}(k) = \res(G_k) = \bigmultisum_{n \in 
  N_{G_k}} l_{G_k}(n)$ and therefore $\chi(l_{\mG}(k),x_i)=\sum_{n \in      
  N_{G_k}} \chi(l_{G_k}(n), x_i)$.  Symmetrically for $\tgt_{\bar G^i}$.
  \item the labeling function $l: N_{\mG} \ra \mwp(\var)$
    is the disjoint union of all the $l_{G_k}$. Namely, $l(n)=l_{G_k}(n)$ iff 
    $n \in N_{G_k}$.
  \end{itemize}

  We now want to prove that $\mG'$ is a sharing graph for $\{x_1/t_1,
  \ldots, x_p/t_p\}$ and $S$. The only thing we need to prove is 
  that $\flatten{\mG'}$ is connected (the other conditions hold by 
  construction).
  
  Assume that there is an edge $i \ra j$ in $G^k$, and consider nodes $n_i
  \in N_{G_i}$ and $n_j \in N_{G_j}$. We prove that there is a path in
  $\flatten{\mG'}$ from $n_i$ to $n_j$. Actually, there is in
  $\bar{G}^k$ at least an edge $m_i \ra m_j$ from a node $m_i \in
  N_{G_i}$ to $m_j \in N_{G_j}$. Since $G_i$ and $G_j$ are connected,
  there are in $\bar{G}^1$ two paths $\pi: n_i \ra m_i$ and $\pi': m_j
  \ra n_j$. Therefore $\pi \pi'$ is a path in $\flatten{\mG'}$ from
  $n_i$ to $n_j$. 
  
  Now, given two generic nodes $n_i, n_j$ where $n_i \in N_{G_i}$ and
  $n_j \in N_{G_j}$, we know there is a path $\pi$ in
  $\flatten{\mG}$ from $i$ to $j$. Applying the result of the
  previous paragraph to each edge in $\pi$, we immediately get that
  $n_i$ and $n_j$ are connected.
     
  Finally it is easy to check that $\res(\mG')=B$
  and this concludes the proof of the theorem.
\end{proof}

By exploiting the previous lemma, it is now a trivial task to show that 
parallel and sequential unification compute the same result.

\begin{theorem}\label{th:coincide}
  The abstract operators $\mgu_\lp$ and $\mgup$ coincide.
\end{theorem}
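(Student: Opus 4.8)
The plan is to prove the coincidence by induction on the number $p$ of bindings appearing in $\theta$, keeping the set $S$ universally quantified so that the inductive hypothesis may be reapplied to a modified argument. Since the paper already notes that $\mgup$ and $\mgu_\lp$ agree on single-binding substitutions, and since Lemma~\ref{lem:mgucomp} lets us peel one binding off a parallel unification, essentially all the work is already in place.

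For the base case $\theta = \epsilon$ (that is, $p = 0$), both operators return $S$: we have $\mgu_\lp(S,\epsilon) = S$ by definition, and $\mgup(S,\epsilon) = S$ as already observed in the proof of Lemma~\ref{lem:mgucomp}.

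For the inductive step, I would write $\theta = \{x_1/t_1\} \cup \theta'$, where $\theta'$ collects the remaining $p-1$ bindings, and then chain three equalities. First, unfolding the recursive clause in the definition of $\mgu_\lp$ gives
\[
\mgu_\lp(S, \{x_1/t_1\} \cup \theta') = \mgu_\lp(\mgup(S, \{x_1/t_1\}), \theta').
\]
Next, applying the inductive hypothesis to the shorter substitution $\theta'$, with $\mgup(S, \{x_1/t_1\})$ playing the role of $S$, turns the outer sequential unification into a parallel one:
\[
\mgu_\lp(\mgup(S, \{x_1/t_1\}), \theta') = \mgup(\mgup(S, \{x_1/t_1\}), \theta').
\]
Finally, Lemma~\ref{lem:mgucomp} collapses the nested parallel unification into a single step:
\[
\mgup(\mgup(S, \{x_1/t_1\}), \theta') = \mgup(S, \{x_1/t_1\} \cup \theta') = \mgup(S, \theta).
\]
Composing the three equalities yields $\mgu_\lp(S, \theta) = \mgup(S, \theta)$, as required.

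I do not anticipate any genuine obstacle, because Lemma~\ref{lem:mgucomp} carries the entire combinatorial burden. The one point that requires care is to formulate the induction uniformly over all $S$: the hypothesis must be invoked with the set $\mgup(S, \{x_1/t_1\})$ rather than with $S$ itself, so the statement being proved by induction has to quantify over the first argument. The transfer to the lifted operators on $\Linp$ is then immediate, since both $\mgu_\lp$ and $\mgup$ are lifted to existential substitutions by the very same recipe (cf.\ equation~\eqref{eq:mgup2}), so coincidence of the core operators carries over verbatim.
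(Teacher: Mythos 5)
Your proof is correct and follows essentially the same route as the paper: induction on the number of bindings with $S$ universally quantified, unfolding the recursive definition of $\mgu_\lp$, applying the inductive hypothesis to $\mgup(S,\{x_1/t_1\})$, and collapsing via Lemma~\ref{lem:mgucomp}. The remarks about quantifying over $S$ and lifting to $\Linp$ match the paper's treatment as well.
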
  
\begin{proof}
  The proof is by induction on the number of bindings in $\theta$.  Clearly
  $\mgu_\lp(S,\epsilon) = S =\mgup(S,\epsilon)$. Assume that
  $\mgu_\lp(S,\theta) = \mgup(S,\theta)$ for each $S$.  It follows
  that
  \[
  \begin{array}{ll}
     \mgu_\lp(S,\{x/t\} \cup \theta) \\
    = \mgu_\omega(\mgup(S,\{x/t\}),\theta) & 
    \text{[by definition of $\mgu_\omega$]}\\
    = \mgup(\mgup(S,\{x/t\}),\theta) \quad & 
    \text{[by induction hypothesis]}\\
    = \mgup(S,\{x/t\} \cup \theta) & 
    \text{[by Lemma \ref{lem:mgucomp}]}
  \end{array}
  \]
  and this proves the theorem.
\end{proof}

\subsection{Optimality of Abstract Unification}

An immediate consequence of Theorem \ref{th:coincide} is that $\mgup$ is 
correct, since it coincides with $\mgu_\lp$ which has been proved correct in 
\cite{AmatoS09sharlin}. We now want to prove that it is optimal.  First, we 
prove optimality in the special case of $\mgup([S]_U,\theta)$ with 
$\vars(\theta) \subseteq U$. Next, we extend this result to the general case.

In the Example~\ref{ex:psharing2}, we have already shown how to build
a substitution $\delta$ which mimics the effect of a sharing graph. 
We now give another example, introducing the terminology to be used in the
proof of optimality to come.

\begin{example}\label{ex:optimality}
We refer to Example \ref{ex:three}. Let $U=\{u,v,x,y,z\}$ be the set 
of variables of interest. We show how to build a substitution  
$\delta$ such that  $[S]_U \rightslice [\delta]_U$ and $u^3v^2x^2y^2z^3 \in 
\alpha_\lp(\mgu([\delta]_U,\theta))$. 

For each node $n \in \{a,b,c,d,e\}$ of the sharing graph in 
\eqref{eq:psharingtwo}, we consider a different fresh  variable $w_n$. For any 
variable $\nu \in U \setminus \dom(\theta) = \{ v, y, z \}$, we define 
$\delta(\nu)$ as the following term of arity $\sum_n 
l_\mG(n)(\nu)$:
\[
\delta(\nu) =  r(\underbrace{w_a,\ldots,w_a}_{\text{$l_{\mG}(a)(\nu)$
      times}}, \underbrace{w_b,\ldots,w_b}_{\text{$l_{\mG}(b)(\nu)$
      times}}, \ldots,
  \underbrace{w_e,\ldots,w_e}_{\text{$l_{\mG}(e)(\nu)$ times}}) \enspace .
  \]
Since $l_{\mG}(n)$ is the label of the node $n$ and $l_{\mG}(n)(\nu)$ is 
the multiplicity of 
$\nu$ in such a label, we have:
\[
\delta(v) = r(w_a,w_b,w_d) \qquad \delta(y) = r(w_d,w_e) \qquad \delta(z) = 
r(w_d) \enspace .
\]
For the variables in $\dom(\theta)=\{u,x\}$ we define $\delta$ in a different 
way. Consider the first layer of the sharing graph, corresponding to the 
binding $x/y$, and denote by $f^1$ an injective map from occurrences of 
variables $w_n$ in $\delta(\theta(x))$ to edges targeted  at $n$. 
In  this case, we have $\delta(\theta(x))=r(w_d,w_e)$ and $f^1 = \{ \xi_d 
\mapsto e_1, \xi_e \mapsto e_2 \}$, where $\xi_d = 1$ and $\xi_e = 2$ are 
the positions of $w_d$  and $w_e$ in $\delta(\theta(x))$. 

Analogously, we define $f^2$ for the binding $u/s(z,s(z,v))$. In this 
case $\delta(\theta(u))=s(r(w_d), s(r(w_d), r(w_a,w_b,w_d))$ and a possible 
$f^2$ is $\{ 1\cdot 1 \mapsto e_7, 2 \cdot 1 \cdot 1 \mapsto e_4, 2 \cdot 2 
\cdot 1 \mapsto e_3, 2 \cdot 2 \cdot 2 \mapsto e_5, 2 \cdot 2 \cdot 3 \mapsto 
e_6 \}$. In this case, other values for $f^2$ are possible: we could exchange 
the assignments for $1\cdot 1$, $2 \cdot 1 \cdot 1$ and  $2 \cdot 2 \cdot 3$ 
freely.

We now define $\delta(x)=f^1(\delta(\theta(x)))=r(w_a,w_b)$. Here we denote 
with $f^1(t)$ the result of replacing, in the term $t$, the variable in 
position $\xi$ with the variable associated to the source of $f^1(\xi)$. 
Analogously we define $\delta(u)=f^2(\delta(\theta(u)))=s(r(w_d), s(r(w_c), 
r(w_c,w_e,w_b))$. 

Note that the terms $\delta(x)$ and $\delta(u)$ are obtained by replacing in 
$\theta(x)$ and $\theta(u)$ each occurrence $\xi$ of variable $\nu$ with a 
variant of $\delta(\nu)$. We call $\delta^1_{1}(y)$ the term which replaces 
$y$ in position $1$ of  $\theta(x)$, \ie $r(w_d,w_e)$. Analogously, we 
define $\delta^2_{1}(y) = r(w_d)$, $\delta^2_{2 \cdot 1}(y) = r(w_c)$, and 
$\delta^2_{2 \cdot 2}(v) = r(w_c,w_e,w_b)$ for the replacements in 
$\theta(u)$ of  $y$ in position $1$, $y$ in position $2 \cdot 1$ 
and $v$ in position $2 \cdot 2$ respectively. This terminology will be used in 
the proof.

We have that $[S]_U \rightslice [\delta]_U$ and
$\mgu(\delta, \theta)= \mgu(\{x=y, u=s(z,s(z,v)), u=s(r(w_d), s(r(w_c), 
r(w_c,w_e,w_b)), v=r(w_a,w_b,w_d), x=r(w_a,w_b), y=r(w_d,w_e), z=r(w_d)\})$ is
\[
    \begin{split}
       \mgu(\delta,\theta)
      &= \theta \circ \mgu(\{ v=r(w_a,w_b,w_d), y=r(w_d,w_e), 
      z=r(w_d) \} \cup\\
      &\phantom{{}={}} \{ y=r(w_a,w_b), s(z,s(z,v)) = s(r(w_d), s(r(w_c), 
      r(w_c,w_e,w_b))) \})\\
      &= \theta \circ \mgu(\{ v=r(w_a,w_b,w_d), y=r(w_d,w_e), 
      z=r(w_d) \} \cup \\
      & \phantom{{}={}} \{  y=r(w_a,w_b), z=r(w_d), z=r(w_c), v=r(w_c,w_e,w_b) 
      \}\\
      &=\theta \circ \delta|_{U \setminus \dom(\theta)} \circ 
      \mgu(\{r(w_d,w_e)=r(w_a,w_b), r(w_d)=r(w_d), \\
      & \phantom{{}={}} r(w_d)=r(w_c), 
      r(w_a,w_b,w_d)=r(w_c,w_e,w_b) \}) \\
      &=\theta \circ \delta|_{U \setminus \dom(\theta)} \circ 
        \mgu(\{w_d=w_a, w_e=w_b, w_d=w_d, \\
      & \phantom{{}={}} w_d=w_c, w_a=w_c, w_b=w_e, w_d=w_b \})      
      \enspace .
    \end{split}
\]
In the last formula, we have an equation $w_n=w_m$ for each edge $n \ra m$ in 
the sharing graph. Since the graph is connected, we pick a variable, say it is 
$w_d$, and we solve the set of equations w.r.t.~that variable, obtaining:
\[
\begin{split}
\mgu(\delta, \theta) &= \theta \circ \delta|_{U \setminus \dom(\theta)} \circ 
\{ w_a/w_d, w_b/w_d, w_c/w_d, w_e/w_d \}\\
&= \{ u/s(r(w_d),s(r(w_d),r(w_d,w_d,w_d)), v/r(w_d,w_d,w_d),\\
&\phantom{{}={}} x/r(w_d,w_d), y/r(w_d,w_d), z/r(w_d)\}
\enspace.
\end{split}
\]
We get $\alpha_\lp(\mgu([\delta]_U,\theta)) = [u^5v^3x^2r^2z]_U$,
where $u^5v^3x^2r^2z= \mgu(\delta,\theta)^{-1}(w_d) |_{U}$.
\end{example}

The above example shows how to find a substitution whose fresh 
variables are aliased according to the arrows in a sharing 
graph. The same idea is exploited in the next theorem for proving the 
optimality of the abstract unification operator $\mgup([S]_U,\theta)$.


\begin{theorem}
  \label{th:lpopt}
The parallel unification $\mgup([S]_U,\theta)$ is optimal \wrt
$\mgu$, under the assumption that $\vars(\theta) \subseteq U$, that is:
\[
\forall B\in\mgup ([S]_U,\theta)~\exists [\delta]_U \in \Isubst_{\sim}.~[S]_U \rightslice 
[\delta]_U \text{ and } B \in \alpha_\lp(\mgu([\delta]_U,\theta)) \enspace .
 \]
\end{theorem}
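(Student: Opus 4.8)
The plan is to generalize the construction illustrated in Example~\ref{ex:optimality} into a systematic recipe that, given any resultant sharing group $B = \res(\mG)$ arising from a sharing graph $\mG = \{G^i\}_{i \in [1,p]}$ for $S$ and $\theta = \{x_1/t_1, \ldots, x_p/t_p\}$, produces a concrete substitution $[\delta]_U$ with $[S]_U \rightslice [\delta]_U$ and $B \in \alpha_\lp(\mgu([\delta]_U,\theta))$. First I would assign to each node $n \in N_\mG$ a distinct fresh variable $w_n$, so that the sharing group $l_\mG(n)$ is ``incarnated'' by $w_n$. The key requirement for correctness of the abstraction is that $\delta^{-1}(w_n)|_U = l_\mG(n)$, i.e.\ each fresh variable $w_n$ must occur in $\delta(\nu)$, for $\nu \in U$, exactly $l_\mG(n)(\nu)$ times. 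For the variables $\nu \in U \setminus \dom(\theta)$ this is achieved directly, as in the example, by defining $\delta(\nu)$ to be a term (under a fixed function symbol $r$) listing each $w_n$ with multiplicity $l_\mG(n)(\nu)$. This immediately gives $[S]_U \rightslice [\delta]_U$ once we check the bound variables too.

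The delicate part is defining $\delta$ on $\dom(\theta) = \{x_1,\ldots,x_p\}$ so that unifying $\delta$ with $\theta$ aliases the $w_n$'s \emph{exactly} according to the edges of $\mG$. The natural approach is to first build, for each binding $x_i/t_i$, the term $\delta(\theta(x_i)) = \delta(t_i)$ by substituting into $t_i$ the already-defined terms $\delta(\nu)$ (choosing fresh variants where a variable repeats, as the example does with $\delta^i_\xi$); this term has one occurrence of some $w_n$ at each ``variable position'' $\xi$, and by the in-degree condition of Definition~\ref{def:parallelSharinGraph} the number of such occurrences of $w_n$ equals $\chi(l_\mG(n), t_i)$, which is precisely the in-degree of $n$ in $G^i$. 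I would then fix an injective map $f^i$ from these occurrence-positions of $w_n$ in $\delta(t_i)$ onto the edges of $G^i$ targeted at $n$ (a bijection exists by the in-degree condition), and define $\delta(x_i) = f^i(\delta(t_i))$ to be the term obtained by replacing the variable in position $\xi$ with $w_{m}$ where $m = \src_{G^i}(f^i(\xi))$. The out-degree condition guarantees that in $\delta(x_i)$ each $w_m$ occurs exactly $\chi(l_\mG(m),x_i)$ times, so that $\delta^{-1}(w_m)|_U = l_\mG(m)$ holds across both the $\dom(\theta)$ and the non-$\dom(\theta)$ parts, completing $[S]_U \rightslice [\delta]_U$.

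With $\delta$ in hand, the computation of $\mgu([\delta]_U, \theta)$ mirrors the example: since $\theta$ is idempotent and $\vars(\theta) \subseteq U$, we have $\mgu(\delta,\theta) = \theta \circ \delta|_{U \setminus \dom(\theta)} \circ \sigma$, where $\sigma$ is the mgu of the flattened system. Because $\delta(x_i)$ and $\delta(t_i)$ were built from the same skeleton differing only in the two endpoints prescribed by each edge, unifying them position by position yields exactly one equation $w_n = w_m$ per edge $n \ra m$ of $\flatten\mG$. Since $\flatten\mG$ is connected (the third condition of the definition), this system forces all $w_n$ to a single representative, say $w_{n_0}$, so $\sigma$ maps every $w_n$ to $w_{n_0}$. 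A direct count then shows that $\mgu(\delta,\theta)^{-1}(w_{n_0})|_U = \biguplus_{n \in N_\mG} l_\mG(n) = \res(\mG) = B$, giving $B \in \alpha_\lp(\mgu([\delta]_U,\theta))$.

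I expect the main obstacle to be the bookkeeping that shows the \emph{exact} multiplicities match, i.e.\ that the in/out-degree conditions translate precisely into the occurrence counts of each $w_n$ in $\delta(x_i)$ and $\delta(t_i)$, so that no spurious extra sharing group is introduced and $B$ itself (not merely a group containing or contained in $B$) is produced. Care is also needed with repeated variables in $t_i$: distinct syntactic occurrences of the same variable $\nu$ in $t_i$ must receive \emph{independent} fresh variants of $\delta(\nu)$ (the $\delta^i_\xi$ device), otherwise the unification would collapse positions that the sharing graph intends to keep distinct, again perturbing the multiplicities. Once these counting invariants are established, connectivity of $\flatten\mG$ does the rest via a single mgu computation, and correctness of the abstraction $[S]_U \rightslice [\delta]_U$ follows from the degree conditions as indicated.
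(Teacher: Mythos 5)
Your plan is correct and is essentially the paper's own proof: the paper likewise generalizes Example~\ref{ex:optimality}, defining $\delta(\nu)$ for $\nu \in U \setminus \dom(\theta)$ by listing each $w_n$ with multiplicity $l_{\mG}(n)(\nu)$, setting $\delta(x_i)=f^i(\delta(t_i))$ via an injective occurrence-to-edge map, checking $[S]_U \rightslice [\delta]_U$ from the degree conditions, and reducing $\mgu(\delta,\theta)$ to one equation $w_n=w_m$ per edge, so that connectivity of $\flatten{\mG}$ collapses all the $w_n$ to a single variable whose preimage restricted to $U$ is exactly $\res(\mG)=B$. The only blemish is terminological: the $\delta^i_\xi$ terms are not \emph{fresh} variants of $\delta(\nu)$ (that would break your own occurrence count of $w_n$ in $\delta(t_i)$), but variants whose variables are the edge sources dictated by $f^i$, which is what your counting argument actually uses.
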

\begin{proof}
  Let $\theta =\{x_1/t_1, \ldots, x_p/t_p\}$ and $B \in
  \mgup([S]_U,\theta)$. By definition of $\mgup$, there exists a 
  sharing graph $\mathcal G=\{G^i\}_{i\in [1,p]}$ such that $B =
  \res(\mG)$. Let $N_{\mG}=\{ n_1, \ldots, n_k \}$.
    We want to define a substitution $\delta$ such that $[S]_U
  \rightslice [\delta]_U$ and $B \in
  \alpha_\lp(\mgu([\delta]_U,\theta))$.
If $B=\emptymulti$ this is trivial, just take $\delta=\epsilon$, hence we 
assume that $B \neq \emptymulti$.   
The structure of the proof is as follows: first, we define a substitution 
$\delta$ which unifies with $\theta$; second, we show that $\delta$ is 
approximated by $[S]_U$, namely, $[S]_U  \rightslice [\delta]_U$; third, we 
show that $B \in
  \alpha_\lp(\mgu([\delta]_U,\theta))$.

  \textbf{First part.} We define a substitution $\delta$ which unifies 
  with $\theta$. For each node $n \in
  N_{\mG}$ we consider a fresh variable $w_{n}$ and we denote by $W$
  the set of all these new variables.
 
  For any $y \in U \setminus \dom(\theta)$ we define as $\delta(y)$ the 
  term of arity $\sum_{n\in N_{\mG}}l_{\mG}(n)(y)$ given by:
  \[
  \delta(y)= r(\underbrace{w_{n_1},\ldots,w_{n_1}}_{\text{$l_{\mG}(n_1)(y)$
      times}}, \underbrace{w_{n_2},\ldots,w_{n_2}}_{\text{$l_{\mG}(n_2)(y)$
      times}}, \ldots,
  \underbrace{w_{n_k},\ldots,w_{n_k}}_{\text{$l_{\mG}(n_k)(y)$ times}})
  \enspace .
  \]
  
  For any $x_i \in \dom(\theta)$, consider an
  injective function $f^i$ which maps each occurrence of a variable $w_n$ in 
  $\delta(t_i)$ to an edge in $E_{G^i}$ targeted at $n$. Note that the map 
  exists 
  since the number of occurrences of $w_n$ in $\delta(t_i)$ is exactly 
  $\sum_{y 
  \in t_i} \occ(y,t_i) \cdot \occ(w_n,\delta(y)) =  \sum_{y 
  \in t_i} \occ(y,t_i) \cdot l_\mG(n)(y) = \chi(l_{\mG}(n),t_i)$ which 
  is the in-degree of $n$. Then, we define $\delta(x_i)$ as $f^i(\delta(t_i))$ 
  where $f^i(\delta(t_i))$ is the result of replacing, in $\delta(t_i)$, the 
  variable in position $\xi$ with the variable associated to the source 
  of $f^i(\xi)$.   

  The image of $f^i$ is the set of all the edges in 
  $G^i$. Given an edge $e: n_1 \ra n_2$, the sharing group associated to 
  $n_2$ should contain at least a variable $y \in t_i$, hence $w_{n_2}$
  will occur in $\delta(y)$ and $e$ will be $f^i(\xi)$ for some occurrence
  $\xi$ of $w_{n_2}$ in $\delta(t_i)$.

  \textbf{Second part.} Now we show that $[S]_U \rightslice [\delta]_U$. We need to consider
  all the variables $v \in \var$ and check that $\delta^{-1}(v)|_U \in
  S$. We distinguish several cases:
  \begin{itemize}
  \item let us choose as $v$ the variable $w_n$ for some $n \in N_\mG$. By
    construction, for each $y \in U \setminus \dom(\theta)$, we have 
    that $\occ(w_n,\delta(y))=l_{\mG}(n)(y)$.   Since $\mG$ is a  
    sharing graph, for any $x_i \in \dom(\theta)$ 
    there are $l_{\mG}(n)(x_i)$ edges in $E^i$ departing from $n$. They are
    all in the image of $f^i$, hence $\occ(w_n,\delta(x_i)) = 
    l_{\mG}(n)(x_i)$.We obtain the
    required result which is $\delta^{-1}(w_n)|_U=l_{\mG}(n) \in S$.
  \item if we choose a variable $v \in U$ then $v \in \dom(\delta)$
    and $\delta^{-1}(v)=\emptymulti \in S$;
  \item finally, if $v \notin U \cup W$, then $\delta^{-1}(v)=\multil
    v \multir$ and $\delta^{-1}(v)|_{U}=\emptymulti \in S$.
  \end{itemize}
  
  \textbf{Third part.} We now show that $B \in 
  \alpha_\lp(\mgu([\delta]_U,\theta))$.   
  Note that $\delta(x_i)$ is obtained by replacing, in $t_i$, each occurrence
  $\xi$ of a variable $y \in \rng(\theta)$ with a variant of $\delta(y)$. We 
  denote this variant by $\delta^i_{\xi}(y) = f^i(\delta(t_i))(\xi)$.
  By definition of $\mgu$ over $\Isubst_\sim$, we have that
  $\mgu([\delta]_U,\theta)=[\mgu(\delta,\theta)]_U$. We obtain:
  \begin{equation}
    \label{eq:opt-proof1}
    \begin{split}
      \eta &= \mgu(\delta,\theta)\\
       &= \theta \circ \mgu \bigl(
                  \eq(\delta|_{U \setminus \dom(\theta)}) \cup 
             \{ \theta(x_i)=\delta(x_i)) \mid i \in [1,p] \}
                  \bigr)\\
      &= \theta \circ \mgu \bigl(
            \eq(\delta|_{U \setminus \dom(\theta)}) \cup 
       \{ t_i=f^i(\delta(t_i)) \mid i \in [1,p]  \}
            \bigr)\\
      &= \theta \circ \mgu (\eq(\delta|_{U \setminus \dom(\theta)}) \cup {}
 \{ y=\delta^i_\xi(y) \mid i \in [1,p], 
                t_i(\xi)=y\}\bigr)\\
      &= \theta \circ \delta_{U \setminus \dom(\theta)} \circ 
      \mgu\bigl(\{\delta(y)=\delta^i_\xi(y) \mid i \in [1,p], t_i(\xi)=y 
      \}\bigr) \enspace .
    \end{split}
  \end{equation}
  
  The set of equations $F=\{\delta(y)=\delta^i_\xi(y) \mid i \in [1,p], 
  t_i(\xi)=y \}$ has a solution, given by aliasing
  some variables. We show that, for any edge $e: n 
  \ra m \in E_{G^i}$, it follows from $F$ that $w_n=w_m$. Since the image
  of $f^i$ is the set of all the edges in $G^i$, there is an occurrence $\xi$ 
  of $w_m$ in $\delta(t_i)$ such that $f^i(\xi)=e$. Occurrence $\xi$ 
  may be written as $\xi' \cdot \xi''$ where $\xi'$ is an occurrence of a 
  variable $y \in t_i$ and $\xi''$ is an occurrence of $w_m$ in 
  $\delta(y)$. Therefore $\delta(y)=\delta^i_{\xi'}(y) \in F$, and from
  this follows $w_m = \delta(y)(\xi'')$ is unified with $w_n = 
  \delta^i_{\xi'}(y)(\xi'')$.
  
  Since this holds for any edge in $E_{G^i}$ and for any $i \in
  [1,p]$, it follows that for any edge $n\ra m \in
  E_{\flatten{\mG}}$ the equation $w_m=w_n$ is entailed by $F$. We
  know that $\flatten{\mG}$ is connected, hence for any $n,m\in N_{\mG}$,
  the set of equations in $F$ implies $w_n=w_m$.  We choose a
  particular node $\bar{n} \in N_{\mG}$ and, for what we said before, we
  have $\mgu(F)=\{ w_{n} / w_{\bar{n}} \mid n \in N_{\mG}, n \neq \bar n \}$. 
  We show that
  $\eta^{-1}(w_{\bar{n}})|_{U}=B$.
  \[
    \begin{split}           
      &\eta^{-1}(w_{\bar{n}})|_{U}\\
      &= \theta^{-1}(\delta^{-1}|_{U \setminus \dom(\theta)}(\multil 
      w_{n_1},\ldots,w_{n_k}\multir ))|_{U}=\\
      &= \theta^{-1}(\multil w_{n_1},\ldots,w_{n_k} \multir \multisum
      \lambda y \in U \setminus \dom(\theta).
      \sum_{n \in N_{\mG}} l_{\mG}(n)(y))|_{U}=\\
      &= \lambda y \in U \setminus \dom(\theta). \sum_{n \in N_{\mG}}
      l_{\mG}(n)(y) \ \multisum \\
      & \phantom{{}={}} \lambda x \in \dom(\theta). \sum_{y \in \var}
      \occ(y,\theta(x)) \cdot \sum_{n \in N_{\mG}} l_{\mG}(n)(y) \\
      &= \lambda y \in U \setminus \dom(\theta).  \sum_{n \in
        N_{\mG}} l_{\mG}(n)(y) \multisum \lambda x \in \dom(\theta).
      \sum_{n \in N_{\mG}} \chi(l_{\mG}(n),\theta(x)) \enspace .
    \end{split}
  \]
  Since $\mathcal G$ is a sharing graph, the total in-degree
  for $G^i$, \ie $\sum_{n \in N_{\mG}}
  \chi(l_{\mG}(n),t_i)$, is equal to the total out-degree
  $\sum_{n \in N_{\mG}} \chi(l_{\mG}(n),x_i)$. Hence
  \[
    \begin{split} 
      & \eta^{-1}(w_{\bar{n}})|_{U}\\
      =\ &\lambda y \in U\setminus \dom(\theta).  \sum_{n \in N_{\mG}}
      l_{\mG}(n)(y) \multisum
      \lambda x \in \dom(\theta).  \sum_{n \in N_{\mG}} \chi(l_{\mG}(n),x)\\
      =\ &\lambda x \in U.  \sum_{n \in N_{\mG}} l_{\mG}(n)(x)\\
      =\ &\res(\mG) = B \enspace .
    \end{split}
  \]
  This concludes the proof.
\end{proof}

The previous proof requires $\vars(\theta) \subseteq U$. However, the 
same construction also works when this condition does not hold.
\begin{example}
  Let $U=\{x,y\}, S=\{x^2,x^2y\}, \theta=\{x/s(y,z)\}$ and assume that we 
  want to compute $\mgu_\lp([S]_U,\theta)$. By extending the domain of 
  variables of interest to $V=\{x,y,z\}$, we obtain $[S']_V= 
  [x^2,x^2y,z]_{x,y,z}$. One of the 
  sharing graphs for $\theta$ and $[S']_V$ is
\begin{equation*}
\begin{tikzpicture}[shgraph,x=2cm,y=1.5cm]
\shnode{0}{2} a (x2) at (0,0) {$x^2$};
\shnode{1}{2} b (x2y) at (1,0) {$x^2y$};
\shnode{1}{0} c (za) at (0,-1) {$z$};
\shnode{1}{0} d (zb) at (1,-1) {$z$};
\shnode{1}{0} e (zc) at (2,-1) {$z$};
\path (x2) edge (x2y);
\path (x2) edge (za);
\path (x2y) edge (zb);
\path (x2y) edge[bend left] (zc);
\end{tikzpicture}
\end{equation*}
Following the proof of the previous theorem, we obtain the substitution
\[
\delta'=\{x/s(r(w_a),r(w_a,w_b,w_b)), y/r(w_b), z/r(w_c,w_d,w_e)\}
\enspace,
\]
where $[S']_V \rightslice [\delta']_V$ and $x^4yz^3 \in 
\alpha_\lp(\mgu_\lp([\delta']_V,\theta))$. However, what we are looking for is 
a substitution $\delta$ such that $[S]_U \rightslice [\delta]_U$ and $x^4yz^3 
\in \alpha_\lp(\mgu_\lp([\delta]_U,\theta))$. Nonetheless, we may choose 
$\delta=\delta'$ (or, if we prefer, $\delta=\delta'|_{\{x,y\}}$) to get the 
required substitution.
\end{example}

This is not a fortuitous coincidence. We will prove that it happens 
consistently, and therefore $\mgup([S]_U, \theta)$ is optimal even 
when $\vars(\theta) \nsubseteq U$.

Note that it is not an obvious result. The operation 
$\mgup([S]_U,\theta)$ is designed by first extending the set of variables of 
interest of the abstract object in order to include all the variables in 
$\vars(\theta) \setminus U$ and then performing the real operation. This 
construction does not always yield optimal operators. For example, 
\citeN{AmatoS09sharing} show that this is not the case for $\Sharing$.

\begin{theorem}[Optimality of $\mgup$]
  The abstract parallel unification $\mgup([S]_U,\theta)$ is optimal, that is:
    \[
  \forall B\in\mgup ([S]_U,\theta)~\exists [\delta]_U \in \Isubst_{\sim}.~[S]_U \rightslice 
  [\delta]_U \text{ and } B \in \alpha_\lp(\mgu([\delta]_U,\theta)) \enspace .
   \]
\end{theorem}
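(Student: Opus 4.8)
The plan is to reduce the statement to Theorem~\ref{th:lpopt} by enlarging the set of variables of interest. I set $V = U \cup \vars(\theta)$ and $S' = S \cup \{\multil v \multir \mid v \in \vars(\theta) \setminus U\}$, so that by the very definition of the lifted operator $\mgup([S]_U,\theta) = [\mgup(S',\theta)]_V$, and every $B \in \mgup([S]_U,\theta)$ is the resultant group $\res(\mG)$ of a sharing graph $\mG$ for $S'$ and $\theta$. Since now $\vars(\theta) \subseteq V$, Theorem~\ref{th:lpopt} applies to $[S']_V$ and yields a substitution $\delta'$ with $[S']_V \rightslice [\delta']_V$ and $B \in \alpha_\lp(\mgu([\delta']_V,\theta))$. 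My claim is that the \emph{same} $\delta'$, now read through $\sim_U$, already works; that is, $[\delta]_U = [\delta']_U$ witnesses optimality for $[S]_U$. I assume $S \neq \emptyset$ (so that $\emptymulti \in S$) and dispose of the degenerate case $B = \emptymulti$ by taking $\delta = \epsilon$.

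First I would verify $[S]_U \rightslice [\delta']_U$, which is a routine restriction argument. For every variable $v$ we already know $\delta'^{-1}(v)|_V \in S'$, and since $U \subseteq V$ we have $\delta'^{-1}(v)|_U = (\delta'^{-1}(v)|_V)|_U$. If $\delta'^{-1}(v)|_V$ lies in $S$ then its support is contained in $U$, so the further restriction changes nothing; if instead it is one of the added singletons $\multil c \multir$ with $c \in \vars(\theta)\setminus U$, or the empty multiset, then restricting to $U$ gives $\emptymulti \in S$. Hence $\delta'^{-1}(v)|_U \in S$ in every case.

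The substance of the proof is showing $B \in \alpha_\lp(\mgu([\delta']_U,\theta))$. I would compute this mgu using the representative $\delta'|_U$, which is legitimate because $\delta'$ has a fresh (hence $\theta$-disjoint) range, so $\vars(\delta'|_U) \cap \vars(\theta) \subseteq U$; thus $\mgu([\delta']_U,\theta) = [\mgu(\delta'|_U,\theta)]_V$. I would then rerun the computation~\eqref{eq:opt-proof1} from the third part of the proof of Theorem~\ref{th:lpopt}, with $\delta'|_U$ in place of $\delta'$. The only difference is that the bindings $\delta'(y)$ for $y \in V \setminus U \subseteq \vars(\theta)$ have disappeared: a range variable $y \in \rng(\theta)\setminus U$ is no longer aliased through its own $\delta'$-image, but is instead pinned down by the equation $\theta(x_i) = \delta'|_U(x_i)$ whenever $y$ occurs in some $t_i$ with $x_i \in U$, and is simply left free when it occurs only under bindings with $x_i \notin U$. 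The goal is to exhibit a single variable $w^\ast$ of $\mgu(\delta'|_U,\theta)$ whose sharing group, restricted to $V$, is exactly $\res(\mG) = B$; note that $w^\ast$ may now be such a free range variable rather than the merged fresh variable $w_{\bar n}$ of Theorem~\ref{th:lpopt}.

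The step I expect to be the main obstacle is checking that, after the bindings outside $U$ are discarded, the surviving equations still glue everything into the single group $B$ with the correct multiplicities. The clean observation that makes this work is that every edge of a \emph{discarded} layer $G^i$ (one with $x_i \notin U$) must emanate from a node labelled $\multil x_i \multir$, and, by idempotency of $\theta$, such a node has in-degree $0$ in every layer and out-degree $1$ only in $G^i$ --- it is a leaf of $\flatten{\mG}$. Removing leaves never disconnects a connected graph, so the nodes whose fresh variables survive in $\mgu(\delta'|_U,\theta)$ still induce a connected subgraph and merge into one group; and a variable $y \in \rng(\theta)\setminus U$ attached only through discarded layers forms, together with its leaf sources, the whole graph and is realised by $y$ itself. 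The multiplicity $B(x_i)$ of a discarded domain variable is recovered through $\theta$: summing the layer-$i$ balance $\sum_{n} \chi(l_{\mG}(n),x_i) = \sum_{n} \chi(l_{\mG}(n),t_i)$ over the nodes of $\mG$ gives $B(x_i) = \chi(B,t_i)$, which is exactly the number of occurrences of $w^\ast$ that $\theta$ forces into the image of $x_i$. With these two points in hand the computation closes as in Theorem~\ref{th:lpopt}, yielding $B \in \alpha_\lp(\mgu([\delta']_U,\theta))$ and hence the theorem.
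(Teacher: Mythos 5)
Your proposal is correct, and its skeleton coincides with the paper's: the same enlargement to $V=U\cup\vars(\theta)$ and $S'$, the same witness $\delta'$ imported from Theorem~\ref{th:lpopt}, and essentially the same restriction argument for $[S]_U\rightslice[\delta']_U$. Where you genuinely diverge is in proving $B\in\alpha_\lp(\mgu([\delta']_U,\theta))$. The paper never returns to the sharing graph: it factors the full unifier algebraically as $\mgu(\delta',\theta)=\beta\circ\beta'\circ\eta\circ\delta'|_{U_2}$, with $\eta=\mgu(\delta'|_U,\theta)$, where $\beta$ absorbs the bindings on $\dom(\theta)\setminus U$ and $\beta'$ those on $U_1=(\rng(\theta)\setminus U)\cap\vars(\theta(U))$, and then transfers the witness variable already supplied by Theorem~\ref{th:lpopt} for $[\delta']_V$ across this factorization by a short case analysis (your degenerate case, where the witness is a free range variable $y$, is exactly the paper's case $v\in\rng(\delta'|_{U_2})$). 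You instead re-run the graph-theoretic construction directly on the representative $\delta'|_U$; your key observations --- every edge of a discarded layer emanates from a leaf labelled $\multil x_i\multir$ by idempotency, leaf removal preserves connectivity, a node labelled $\multil y\multir$ with $y$ attached only through discarded layers forces the whole graph to be that node plus leaves, and $B(x_i)=\chi(B,t_i)$ recovers the multiplicities of discarded domain variables --- are all correct and do suffice. What your route buys is independence from the internals of the paper's mgu manipulations, staying entirely at the combinatorial level; what it costs is extra graph bookkeeping, and one step that is compressed to the point of being misleading as stated: after leaf removal, the connected remainder still contains the nodes labelled $\multil y\multir$ with $y\in U_1$, whose own fresh variables do \emph{not} occur in $\delta'|_U$, so ``the nodes whose fresh variables survive still induce a connected subgraph'' is not literally true, and connectivity alone does not merge the surviving variables. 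What saves the argument is precisely your pinning equations: every in-edge of such a node from a layer with $x_i\in U$ corresponds to an occurrence of $y$ in $t_i$, so the equations $t_i=f^i(\delta'(t_i))$ identify the \emph{sources} of any two such edges with each other (source-to-source aliasing replacing the source-to-target aliasing of Theorem~\ref{th:lpopt}) and bind $y$ to a term containing the merged class exactly $B(y)$ times. Spelling this out is what closes the multiplicity count; with it, your proof is complete.
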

\begin{proof}
  Given $[S]_U \in \Linp$ and $\theta \in \Isubst$, proving optimality amounts 
  to show that, for each $B \in \mgup([S]_U,\theta)$, there is $[\delta]_U\in \Isubst_{\sim}$ 
  such that $[S]_U \rightslice [\delta]_U$ and $B \in 
  \alpha_\lp(\mgu([\delta]_U,\theta))$. By definition  $B \in 
  \mgup([S]_U,\theta)$  iff $B\in \mgup(S',\theta)$ for $S'=S \cup \{ 
  \multil v \multir \mid v \in \vars(\theta)\setminus U \}$.
  In the rest of the proof, assume $\theta =\{x_1/t_1, \ldots, x_p/t_p\}$,
  $V=U \cup \vars(\theta)$ and $B\in \mgup(S',\theta)$.
  
  Using the previous theorem, we find $\delta$ such
  that $B \in \alpha_\lp(\mgu([\delta]_V,\theta))$ and $[S']_V
  \rightslice [\delta]_V$. We want to prove that $[S]_U \rightslice
  [\delta]_U$ and $B \in \alpha_\lp(\mgu([\delta]_U,\theta))$. 
  
  We first prove $[S]_U \rightslice [\delta]_U$. Given any $v \in \var$, since
  $[S']_V \rightslice [\delta]_V$, we have $\delta^{-1}(v) \cap V \in S'$.
  There are two cases: either $\delta^{-1}(v) \cap V \in S$ or $\delta^{-1}(v) 
  \cap V = \multil w \multir$ for some $w \in V$. In the first case, 
  $\supp{\delta^{-1}(v) \cap V}\subseteq U$, hence $\delta^{-1}(v) \cap U \in 
  S$. In the latter, $\delta^{-1}(v) \cap U  = \emptymulti \in S$. Therefore
  $[S]_U \rightslice [\delta]_U$.
  
%

  In order to prove that $B \in \alpha_\lp(\mgu([\delta]_U, \theta))$ we need 
  to  study the relationship between $\mgu([\delta]_U, \theta) = 
  [\mgu(\delta|_U, \theta)]_V$ and $\mgu([\delta]_V, \theta) = [\mgu(\delta, 
  \theta)]_V$. We split $\delta$ into $\delta|_{U \cup \rng(\theta)}$ and 
  $\delta|_{\dom(\theta) \setminus U}$. With the same considerations which 
  led to \eqref{eq:opt-proof1}, we have:
  \[
    \begin{split}
      \mgu\bigl(\delta,\theta) &= \mgu(\eq(\theta) \cup \eq(\delta|_{U \cup 
      \rng(\theta)}) \cup
      \eq(\delta|_{\dom(\theta) \setminus U})\bigr)\\
      &=\mgu\bigl(\eq(\theta) \cup \eq(\delta|_{U \cup 
            \rng(\theta)}) \cup \{ x_i = f^i(\delta(t_i)) \mid x_i \in 
            \dom(\theta) \setminus U \})\\
      &=\mgu\bigl(\eq(\theta) \cup 
            \eq(\delta|_{U \cup  \rng(\theta)}) \cup \{ \delta(t_i) = 
            f^i(\delta(t_i)) \mid  x_i  \in \dom(\theta) \setminus U \}\bigr)
            \enspace .
    \end{split}
  \]
  If $x_i \notin U$, then $x_i$ appears in $S'$ only in the multiset $\multil 
  x_i \multir$. Hence $\delta(x_i)=f^i(\delta(t_i))$ is linear and independent 
  from the  other variables, i.e., no variables in $f^i(\delta(t_i))$ appear 
  in  either $\theta$ or other bindings in  $\delta$.  As a result, 
  $\mgu(\delta,\theta)$ may be rewritten as
  \begin{equation*}
    \beta \circ \mgu \bigl(\eq(\theta) \cup \eq(\delta|_{U \cup \rng(\theta)})
    \bigr) \enspace ,
  \end{equation*}
  where $\beta$ is a substitution such that
  $\dom(\beta)=\rng(\delta|_{\dom(\theta) \setminus U})$.
  
  We now split $U \cup \rng(\theta)$ into $U$, $U_1$ and $U_2$ where 
  $U_1 = (\rng(\theta) \setminus U) \cap \vars(\theta(U))$ and $U_2= 
  (\rng(\theta) \setminus U) \setminus \vars(\theta(U))$. If $y \in U_1$ there
  exists $i_y \in [1,p]$ and a position $\xi_y$ such that $x_{i_y} \in U$,
  $\theta(x_{i_y})(\xi_i)= t_{i_y}(\xi_y)=y$ and 
  $\delta(x_{i_y})(\xi_y)=\delta^{i_y}_{\xi_y}(y)$. 
  Since $y \notin U$,   then $\delta(y)$ is linear  and
  independent from $\theta$ and  the other bindings in $\delta$. Therefore
    \[
      \begin{split}
        &\mgu\bigl(\eq(\theta) \cup \eq(\delta|_{U \cup U_2}) \cup 
        \eq(\delta|_{U_1})\\
        &= \mgu\bigl(\eq(\theta) \cup \eq(\delta|_{U \cup U_2}) \cup \{ 
        \delta^{i_y}_{\xi_y}(y)=\delta(y)
        \mid y \in U_1 \}\bigr)\\
        &\qquad \text{[by equations $x_{i_y}=t_{i_y}$ in $\eq(\theta)$ and
         $x_{i_y} = f^{i_y}(\delta(t_{i_y}))$ in $\eq(\delta|_U)$,}\\
        &\qquad \text{\phantom{]}restricted 
         to position $\xi_y$]}\\
        &= \beta' \circ \mgu\bigl(\eq(\theta) \cup \eq(\delta|_{U \cup U_2}) 
        \bigr)\\
        &\qquad \text{[linearity and independence of $\delta(y)$]}\\        
      \end{split}
    \]
    where $\beta'=\mgu(\{ \delta^{i_y}_{\xi_y}(y)=\delta(y)\mid y \in U_1 
    \})$ and  $\dom(\beta')=\rng(\delta|_{U_1})$. 
%
   We may proceed as follows:
   \[
     \begin{split}
      &\mgu(\eq(\theta) \cup \eq(\delta|_U) \cup 
      \eq(\delta|_{U_2}))\\
      &= \mgu(\eq(\theta|_U) \cup \eq(\theta|_{\dom(\theta) \setminus U}) \cup
      \eq(\delta|_U) \cup \eq(\delta|_{U_2}))\\
      &= \theta|_{\dom(\theta) \setminus U} \circ \mgu(\eq(\theta|_U) \cup 
         \eq(\delta|_U) \cup \eq(\delta|_{U_2}))\\
         &\qquad \text{[since $\dom(\theta) \setminus U$ is disjoint from the 
         variables in $\theta|_U$, $\delta|_U$, $\delta_{U_2}$]}\\
      &=\theta|_{\dom(\theta) \setminus U} \circ \mgu(\theta|_U, \delta|_U) 
      \circ \delta|_{U_2}\\
      & \qquad \text{[since $\vars(\delta|_{U_2})$ is disjoint from 
        $\vars(\theta|_U) \cup \vars(\delta|_U)$]} \enspace .
    \end{split}
  \]
  Note that $\theta|_{\dom(\theta) \setminus U} \circ \mgu(\theta|_U, 
  \delta|_U)$ is $\mgu(\delta|_U,\theta)$, and will be denoted in the 
  following by $\eta$. Therefore, we have
  \[
  \mgu(\delta, \theta) = \beta \circ \beta' \circ \eta \circ \delta|_{U_2}
  \enspace .
  \]
%
  
  We check that $B \in \alpha_\lp([\eta]_V)$. Let $v$ be the variable 
  such that $(\beta \circ \beta' \circ \eta \circ \delta|_{U_2})^{-1}(v) 
  \cap V= B$. We want to find $\bar v$ such that $\eta^{-1}(\bar v) \cap 
  V=B$. First of all, since $\beta$ and $\beta'$ have no variables in 
  common with $V$, then $(\beta \circ \beta' \circ \eta \circ 
  \delta|_{U_2})^{-1}(v) \cap V =  (\eta \circ \delta|_{U_2})^{-1}(v) \cap 
  V$. If $v \notin \vars(\delta|_{U_2})$, then $(\eta \circ 
  \delta|_{U_2})^{-1} = \eta^{-1}( v )$ and we get the 
  required result  with $\bar v=v$. If $v \in \rng(\delta|_{U_2})$ we know
  that $v$ only occurs once in $\delta|_{U_2}$ and never in 
  $\eta$.  Then $(\eta \circ \delta|_{U_2})^{-1}(v)=
  \eta^{-1}(y) \multisum \eta^{-1}(v) = \eta^{-1}(y) \multisum \multil v
  \multir$ for the unique $y \in U_2$ such that $v \in 
  \vars(\delta(y))$.  Therefore, since $v \notin V$, we may choose $\bar v=y$.
  Finally, if $v \in \dom(\delta|_{U_2})$ then $(\eta
  \circ \delta|_{U_2})^{-1}(v)=\emptymulti$ and we may take $\bar v$ to be
  any variable not in $V \cup \vars(\eta)$.
\end{proof}

\section{Related Work}

Proving optimality results for abstract unification operators on domains 
involving sharing information is a difficult task.
The well-known domain $\Sharing$ (without any linearity or freeness 
information) is the unique domain whose abstract unification operator has been 
proved optimal in the general case of multi-binding substitutions 
(\citeN{CF99}, later extended by \citeN{AmatoS09sharing} to substitutions with 
variables out of the set of interest).

In the simpler case of single-binding unification, the only optimality results 
for domains combining sharing and linearity appeared in 
\cite{AmatoS09sharlin}. 

As far as we known, this is the first optimality result for domains involving 
linearity information for multi-binding 
substitutions.

Although $\Linp$ is not amenable to a direct 
implementation, as future work  we plan to design 
suitable abstractions
using numerical domains. The idea is to consider $\omega$-sharing groups  with symbolic multiplicities constrained by linear inequalities, such as $x^\alpha y^\beta$ with $\alpha=\beta+2$.
We plan to implement in our analyzers
Random \cite{AmatoS12lpar,AmatoPS10-rv} and Jandom \cite{AmatoS-SOAP13} an abstract domain based on (template) parallelotopes 
(\citeNP{AmatoPS11JSC}; \citeyearNP{AmatoS12-entcs,AmatoPS10-sas,AmatoLM09}), exploiting the recent localized \cite{AmatoS13sas} 
iteration strategies.



\end{document}